\newdefinition{rmk}{Remark}
\newproof{pf}{Proof}
\newproof{pot}{Proof of Theorem \ref{thm2}}
\newtheorem{assumption}{Assumption}
\newtheorem{proposition}{Proposition}
\newtheorem{remark}{Remark}
\newtheorem{problem}{Problem}
\def\ps@pprintTitle{%
 \let\@oddhead\@empty
 \let\@evenhead\@empty
 \def\@oddfoot{}%
 \let\@evenfoot\@oddfoot}
\begin{document}

%\begin{frontmatter}
%%% Title, authors and addresses
%%
%%% use the tnoteref command within \title for footnotes;
%%% use the tnotetext command for the associated footnote;
%%% use the fnref command within \author or \address for footnotes;
%%% use the fntext command for the associated footnote;
%%% use the corref command within \author for corresponding author footnotes;
%%% use the cortext command for the associated footnote;
%%% use the ead command for the email address,
%%% and the form \ead[url] for the home page:
%%%
%%% \title{Title\tnoteref{label1}}
%%% \tnotetext[label1]{}
%%% \author{Name\corref{cor1}\fnref{label2}}
%%% \ead{email address}
%%% \ead[url]{home page}
%%% \fntext[label2]{}
%%% \cortext[cor1]{}
%%% \address{Address\fnref{label3}}
%%% \fntext[label3]{}
%%
%%\title{}
%%
%%% use optional labels to link authors explicitly to addresses:
%%% \author[label1,label2]{<author name>}
%%% \address[label1]{<address>}
%%% \address[label2]{<address>}
%%
%\author{}
%%
%\address{}
%%
%\begin{abstract}
%%% Text of abstract
%%
%\end{abstract}
%%
%\begin{keyword}
%%% keywords here, in the form: keyword \sep keyword
%%
%%% MSC codes here, in the form: \MSC code \sep code
%%% or \MSC[2008] code \sep code (2000 is the default)
%%
%\end{keyword}
%%
%\end{frontmatter}
% ---------------------------------------------
\begin{frontmatter}
\title{Coupling of Crop Assignment and Vehicle Routing for Harvest Planning in Agriculture}

%NEW---
%\author[label1]{Mogens Graf Plessen\corref{cor1}}
\author{Mogens Graf Plessen\corref{cor1}}
%\ead{mgplessen@gmail.com}
% 
%\address[label1]{University address...}
%
%\cortext[cor1]{MGP is an independent researcher, \texttt{mgplessen@gmail.com}.}
\cortext[cor1]{\texttt{mgplessen@gmail.com}}
%------

%%%%%%%%%%%%%%%%%%%%%%%%%%%%%%%%%%%%%%%%%%%%%%%%%%%%
\begin{abstract}
A method for harvest planning based on the coupling of crop assignment with vehicle routing is presented. Given a setting with multiple fields, a path network connecting these, multiple depots at which a number of harvesters are initially located, the main question addressed is: Which crop out of a set of different crops to assign to each field when accounting for the given setting? It must be answered by every farm manager at the beginning of every work-cycle starting with plant seeding and ending with harvesting. Rather than solving an assignment problem only, it is here also accounted for the connectivity between fields. In practice, fields are often located distant apart. Traveling costs of machinery and limited harvesting windows demand optimised operation and route planning. Therefore, the proposed method outputs crop assignments to fields and simultaneously determines crop-tours, i.e., optimised sequences in which to service fields of the same crop during harvest. It is of particular relevance for larger farms and groups of farms that collaborate and share machinery. Integer programming based exact algorithms are derived. For large numbers of fields, where these algorithms may not be tractable due to computational constraints, elements of clustering and the solution of local Traveling Salesman Problems are added, thereby on the one hand rendering the method heuristic and in general suboptimal, but on the other hand maintaining large-scale applicability. 
\end{abstract}
%%%%%%%%%%%%%%%%%%%%%%%%%%%%%%%%%%%%%%%%%%%%%%%%%%%
\begin{keyword}
Agricultural Logistics; Assignment Problem; Vehicle Routing; Integer Programming; Decision Support System. 
\end{keyword}
%%%%%%%%%%%%%%%%%%%%%%%%%%%%%%%%%%%%%%%%%%%%%%%%%%%
\end{frontmatter}

% --------------------------------------------------------
% for Case1SingleCol: UNcomment following line.
%\linenumbers
% --------------------------------------------------------

\begin{table}
\centering
\begin{tabular}{|ll|}
\hline
\multicolumn{2}{|l|}{MAIN NOMENCLATURE}\\
$\mathcal{D}$ & Set of $D$ depots, indexed by $d,i,j\in\mathcal{D}$. \\
$\mathcal{L}$ & Set of $L$ fields, indexed by $l,i,j\in\mathcal{L}$. \\
$\mathcal{K}$ & Set of $K$ crops, indexed by $k\in\mathcal{K}$. \\
$x_{ij}^k$ & Decision variable for edge $(i,j)$ and crop $k$.\\
$\delta_{l}^k$ & Decision variable for vertex $l$ and crop $k$.\\ 
$\xi^d$ & Decision variable for depot $d$.\\
$\gamma$ & Decision variable for the nr. of active crops.\\
$\tilde{c}_{ij}^k$ & Cost per harvester for arc $(i,j)$ and crop $k$.\\
$c_{ij}^k$ & Cost for arc $(i,j)$ and crop $k$.\\
$r_{l}^k$ & Monetary return for crop $k$ on field $l$.\\
$z^d$ & Cost coefficient for active depot $d$.\\
$m$ & Cost coefficient per active crop.\\
$N_d^{\text{harv},k}$ & Nr. of harvesters at depot $d$ for crop-tour $k$.\\
$\tilde{k}$ & Parameter for $\tilde{k}$-means clustering.\\
$J$ & Monetary return (revenue minus costs).\\
\multicolumn{2}{|l|}{ABBREVIATIONS}\\
IP/LP & Integer/Linear Programming\\
TSP/mTSP & Traveling Salesman Problem/multiple TSP\\
\hline
\end{tabular}
%\vspace{-0.5cm}
\end{table}

%%%%%%%%%%%%%%%%%%%%%%%%%%%%%%%%%%%%%%%%%%%%
\section{Introduction\label{sec_intro}}

Agriculture is a diverse field ranging from biotech to autonomous robots and finance. It is also closely related to logistics. According to \cite{ahumada2009application}, there are four main functional areas for the agri-food supply chain: production, harvesting, storage and distribution. This paper focuses on model-based production planning. In fact, in view of recent plunges of agricultural commodity prices according to the \cite{FTlowwheatprice} that threaten the sustainability of not few farmers, efficiency improvements in production are essential. The decision on the assignment of crops to fields is crucial in that it determines the complete work-cycle. In common practice today, fields are often first manually clustered according to geographical locations before a crop is assigned uniformly to all fields of each cluster, whereby it is often accounted for crop rotation (\cite{havlin1990crop}) in order to reduce soil erosion and to increase soil fertility. The spatial clustering is done for faster harvesting. A trend among farmers in Europe is to collaborate in form of limited companies for the sharing of machinery. Not seldomly conflicts arise about the sequence in which to harvest multiple fields of identical crops but various owners. This paper is motivated by providing a mathematical modeling framework for crop assignment to fields when also accounting for the path network connecting these fields and depots of harvesters.

The basic multiple Traveling Salesman Problem (mTSP) describes the objective of finding total tour cost-minimizing routes for multiple salesmen that all start and end at a single depot, whereby all vertices are visited once by exactly one salesman, see \cite{bektas2006multiple}. Nonnegative edge cost can refer to, e.g., monetary, space or time units. When accounting for various demands at each vertex and limiting the capacity of vehicles (salesmen), the problem is referred to as the capacitated Vehicle Routing Problem (VRP). Variations include the VRP with time windows, with backhauls and with pickup and delivery, see \cite{toth2014vehicle}. The applications are manifold. For example, for vehicle routing with real-time information see~\cite{kim2005optimal} and the references therein. Recently, there has been increased interest in applying logistical optimisation in agriculture for scheduling, routing and fleet management, see \cite{basnet2006scheduling}, \cite{ferrer2008optimization}, \cite{marques2014tactical}, \cite{conesa2016route} and \cite{sorensen2010conceptual}. Special focus was on the coordination of machinery teams distinguishing between primary (harvester) and service (transport) units referred to as PUs and SUs, see \cite{bochtis2009vehicle}, \cite{jensen2012field}, \cite{seyyedhasani2017using} and \cite{orfanou2013scheduling}. All of these references assume that fields with assigned crops are given. To the best of the author's knowledge, the optimised assignment of crops to fields and \emph{simultaneously} accounting for vehicle routing and other constraints for optimised harvest planning has not been discussed in the literature before. Such strategic assignment must be conducted once per year and at the \emph{beginning} of the (yearly) work-cycle, thereby decisively affecting the complete agricultural production-cycle, as the first step within a two-layered framework. The second layer involves coordinations of PUs and SUs, exploiting all of aforementioned references, and is to be conducted at the \emph{end} of the work-cylce during harvest.

The contribution of this paper is a novel method that can (a) assist farm managers in the planning of crop assignments to available fields and (b) simultaneously determines crop-tours, i.e., optimised sequences in which to service fields of the same crop during harvest. Eight different integer programs (IPs) are formulated corresponding to different problem setups.  

The remaining paper is organised as follows. Problem and notation are formulated in Sect. \ref{sec_problFormulation}. Algorithms are derived in Sect. \ref{sec_ProblApproach}. Extensions are discussed in Sect. \ref{sec_Extensions}. Numerical examples are given in Sect. \ref{sec_numerical_experiments}, before concluding with Sect. \ref{sec_conclusion}.

%%%%%%%%%%%%%%%%%%%%%%%%%%%%%%%%%%%%%%%%%%%%
\section{Problem Formulation and Notation\label{sec_problFormulation}}

The general problem is formulated in literal form. Then, mathematical notation is introduced based on which the preferred eight IPs and the main algorithm are derived and discussed in Sect. \ref{sec_ProblApproach}.

\subsection{Problem Formulation\label{subsec_ProblDef}}

\begin{figure}[!ht]
\vspace{0.3cm}
\centering
\includegraphics[width=8cm]{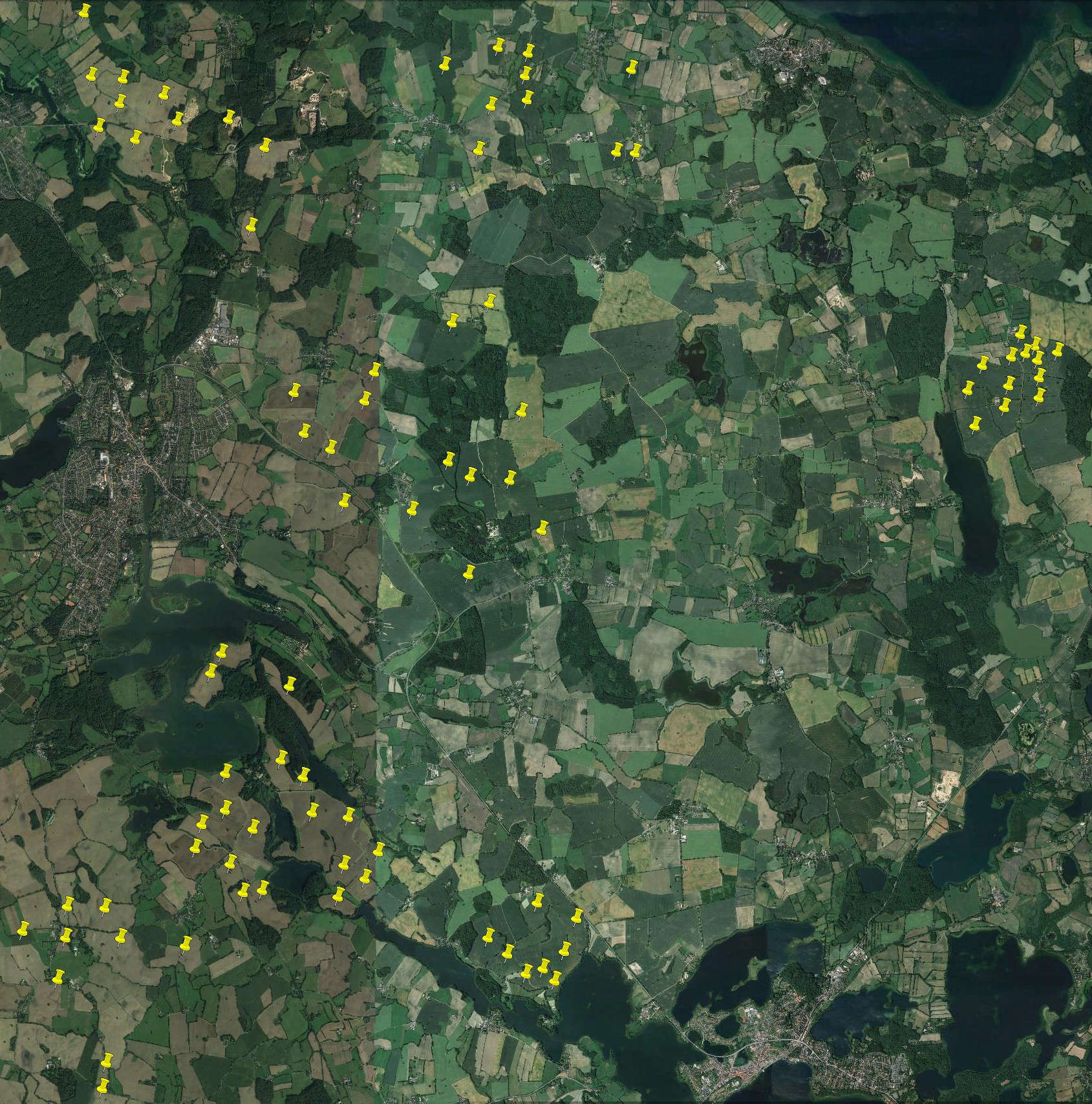}
\caption{Problem visualisation by example. Yellow markers indicate fields to be served by 3 collaborating farms. Overall, there are 85 fields. The satellite image shows an area of 15.9 $\times$ 16.3km. The path network connecting the fields is curvy and often along rural gravel roads only permitting slow traveling speeds. The overall field coverage area is more than 1700 hectares. Traveling distances between pairs of fields is between meters up to dozens of km.}
\label{fig:realWorldScenario}
\end{figure}

\begin{problem}\label{probl_def}
Suppose a setting in which multiple fields, a path network connecting these fields, multiple depots, and potentially multiple harvesters located initially at each depot are given. Then, at the beginning of every work-cycle (beginning with planting and ending with harvesting), a crop has to be assigned to all available fields. This entails the following questions:
\begin{enumerate}
\item Which \emph{crop} to optimally assign to each field?
\item In what \emph{sequence} to optimally service all fields during harvest?
\item How to optimally dispatch \emph{multiple harvesters}, that initially may be located at multiple depots, to the multiple fields? 
\item Which fields should be serviced, which \emph{leased}, and at what prices? 
\end{enumerate}
\end{problem}
The first question decides the complete work-cycle. For optimised harvest planning, its answer must simultaneously account for questions 2) to 4). See also Fig. \ref{fig:realWorldScenario} for problem visualisation and illustration of real-world relevance due to significant inter-field distances. It is stressed that at the beginning of every work-cycle of a farm, planning decisions according to Problem \ref{probl_def} must be taken. Therefore, in this paper, optimisation problems subject to constraints are derived that permit to input data such as, e.g., cost coefficients or revenues per field and crop. Note that at the end of every work-cycle, i.e., at harvest, deviations from initial modeling have occurred. For example, the actual amount of crop harvested per field is different from the predicted one, and weather is influencing potential harvesting-windows. Thus, at the end of the work-cycle, the aforementioned second framework-layer becomes relevant, involving the coordination of PUs and SUs. However, in this paper the focus is exclusively on the first framework-layer.

% --------------------------------------------------------------
\subsection{Notation\label{subsec_notation}}

Notation is mainly adopted from \cite{toth2014vehicle}. A complete graph is denoted by $G=(\mathcal{V},\mathcal{E})$, where $\mathcal{V}=\{0,\dots,D-1,D,\dots,D+L-1\}$ and $\mathcal{E}$ are vertex and edge set, respectively. Throughout, an \emph{arc} is referred to as a directed edge. The cardinality of a set of vertices is denoted by $|\cdot|$. Vertices $i\in\mathcal{D}=\{0,\dots,D-1\}$ and $i\in \mathcal{L} = \{D,\dots,D+L-1\}$ correspond to $D$ depots and $L$ fields. The $K$ different crops are indexed by $\mathcal{K}=\{0,\dots,K-1\}$. Let the number of harvesters located at a depot and suitable for a specific crop be denoted by $N_d^{\text{harv},k},~\forall d\in\mathcal{D},~\forall k\in\mathcal{K}$. Let the normalised nonnegative traveling cost per harvester and crop $k$ between a depot $d$ and a field $j$, or between two fields $i$ and $j$, be denoted by $\tilde{c}_{dj}^k$ and $\tilde{c}_{ij}^k$, respectively. Then, abbreviating $N^{\text{harv},k}=  \sum_{d\in\mathcal{D}} N_d^{\text{harv},k} $, edge costs are defined as:
\begin{subequations}
\begin{align}
c_{ij}^k &= N^{\text{harv},k} \tilde{c}_{ij}^k,~\forall i,j\in\mathcal{L},~\forall k\in\mathcal{K},\label{eq_def_cijk}\\
c_{dj}^k &= N^{\text{harv},k} \tilde{c}_{dj}^k,~\forall d\in\mathcal{D},~\forall j\in\mathcal{L},~\forall k\in\mathcal{K},\label{eq_def_cdjk}\\
c_{dj}^{k,k_\text{min}} &= \sum_{\tilde{d}\in\mathcal{D}} N_{\tilde{d}}^{\text{harv},k} \tilde{c}_{\tilde{d}j}^k,~\forall j\in\mathcal{L},~\forall k\in\mathcal{K},\label{eq_def_cdjkmin}
\end{align}  
\label{eq_def_c}
\end{subequations}
where graph $G$ is in general based on $c_{ij}^k$ and $c_{dj}^k$. Note that $c_{ij}^k$ and $c_{dj}^k$ are here defined as uniformly scaled by $N^{\text{harv},k}$, and that, in practice, $\tilde{c}_{ij}^k$ and $\tilde{c}_{dj}^k$ are proportional to inter-field path lengths. As will be shown, $c_{dj}^{k,k_\text{min}}$ are used for specific IP formulations. Similarly to \eqref{eq_def_cdjk} and \eqref{eq_def_cdjkmin} $c_{jd}^k$ and $c_{jd}^{k,k_\text{max}}$ are defined. In general, traveling costs along the same geographical paths may vary for different $k$ due to different crop-dependent harvesting machinery. Suitably, they may be modeled as varying by a constant offset. The expected revenue from growing and marketing of crop $k\in\mathcal{K}$ on field $l\in\mathcal{L}$ is denoted by $r_l^k$. Maintenance cost per depot are given by $z^d,~\forall d\in\mathcal{D}$. We assume a fixed cost of $m$ incurred for every crop.  All costs shall be in monetary units, here Euros ($\text{\euro}$). Furthermore, throughout it is assumed that arc/edge costs satisfy the triangle inequality (see \cite{fleming2013effects}).

Decision variables are discussed. It is distinguished between two major classes: natural and auxiliary decision variables. The first class comprises binary $x_{ij}^k\in\{0,1\},\forall i,j\in\mathcal{V},\forall k \in\mathcal{K}$, with $x_{ij}^k=1$ indicating arc $(i,j)$ to be element of the optimal route for crop $k$. For the symmetric case with $x_{ji}^k=x_{ij}^k$, (a) decision variable $x_{ji}^k$ can be dismissed, and (b)  $x_{dj}^k\in\{0,1,2\},\forall d\in\mathcal{D},\forall j\in\mathcal{L},~\forall k\in\mathcal{K}$ is used, which permits to indicate a visit of only field $j$ for a route corresponding to crop $k$. Furthermore, binary $\delta_l^k\in\{0,1\},\forall l\in\mathcal{L},\forall k \in \mathcal{K}$, with $\delta_l^k=1$ indicating that crop $k$ is assigned to field $l$. Integer $\gamma$ is such that $1\leq \gamma \leq K$ and  indicates the number of active crops in the optimal solution. As will be shown, auxiliary decision variables result from incorporating various \emph{logical constraints} into the IP formulations.

%%%%%%%%%%%%%%%%%%%%%%%%%%%%%%%%%%%%%%%%%%%%%%%%%%
\section{Problem Solution\label{sec_ProblApproach}}

% ---------------------------------------------------
\subsection{Assumptions and Motivation of Harvester Group Travel\label{subsec_preliminaries}}

The derivation of the proposed eight IPs in Sect. \ref{subsec_8IPs} is based on the following assumptions and discussion.  

\begin{assumption}
Throughout the harvest of any crop, harvesters are refueled and maintained on the corresponding fields growing that crop. Thus, there is no return to depots prior to the complete coverage of all of these fields.
\label{ass_MaintenFields}
\end{assumption}

\begin{assumption}
A fixed number of harvesters is assigned for the harvest of every crop. Then, during the coverage of all fields associated with a crop, harvesters travel as a group. Thus, it is assumed that there is no dispatch of individual harvesters to individual fields.
\label{ass_HarvesterGroupTravel}
\end{assumption}

Assumptions \ref{ass_MaintenFields}, \ref{ass_HarvesterGroupTravel} and cost coefficients \eqref{eq_def_c} are defined for practical considerations discussed below. Thus, in the most general sense of Problem \ref{probl_def} they are limiting. However, on the other hand, they permit to approach Problem \ref{probl_def} based on planning routes for each crop (in the following referred to as  \emph{crop-tours}) similar to the mTSP-framework (\cite{bektas2006multiple}). Thus, a route for each crop and the fields correspond to a traveling salesman route and cities to be visited, respectively. Note, however, despite this basic analogy, the mTSP-framework is insufficient to address Problem \ref{probl_def}. In particular, crop assignment, multiple depots and additional constraints are not addressed. Therefore, eight customised IPs are derived in Sect. \ref{subsec_8IPs}. First, however, it is further elaborated on Assumption  \ref{ass_HarvesterGroupTravel}.

\begin{proposition}\label{rmk_HarvestersAsGroupOneDepot}
Suppose that multiple harvesters are initially located at \emph{one} depot to which they must return after processing all fields associated with a crop. Suppose further a graph with cost coefficients according to \eqref{eq_def_cijk} and \eqref{eq_def_cdjk}. Then, an optimal policy is that all harvesters cover the fields as a group, i.e., without distributing harvesters to different fields of the same crop.
\end{proposition}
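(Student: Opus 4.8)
The plan is to read the cost coefficients \eqref{eq_def_cijk}--\eqref{eq_def_cdjk} literally—every edge actually traversed while servicing crop $k$ is charged at the group rate $c_{ij}^k=N^{\text{harv},k}\tilde c_{ij}^k$—and to show, using the triangle inequality, that a single closed tour realises the cheapest admissible edge usage, so that splitting the team into separate excursions to different fields can only raise the cost. Write $d$ for the unique depot and let $\mathcal L_k\subseteq\mathcal L$ be the fields carrying crop $k$. For a closed walk $W$ from $d$ let $\tilde c^k(W)=\sum_{(i,j)\in W}\tilde c_{ij}^k$ denote its per-harvester length, so that servicing crop $k$ along $W$ costs $N^{\text{harv},k}\,\tilde c^k(W)$. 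By Assumption~\ref{ass_MaintenFields} the team does not return to $d$ before every field of $\mathcal L_k$ has been covered, hence any admissible policy traverses a closed walk $W$ from $d$ that visits all of $\mathcal L_k$, and a policy that ``distributes harvesters to different fields'' corresponds precisely to such a walk that leaves and re-enters $d$ several times, i.e.\ to several excursions spliced at the depot.

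The core of the argument is a shortcutting step. I would first set
\begin{equation}
T^\star=\min\big\{\tilde c^k(R):\ R\ \text{a closed tour through}\ \{d\}\cup\mathcal L_k\big\},
\end{equation}
and then invoke the triangle inequality to show that every admissible closed walk $W$ satisfies $\tilde c^k(W)\ge T^\star$: repeated visits to a vertex and purely intermediate vertices can be bypassed without increasing length, and, crucially, each extra return to the depot replaces a pair of legs $i\to d\to j$ by the direct leg $i\to j$ at no greater cost. Collapsing all excursions in this way turns $W$ into a single tour of $\{d\}\cup\mathcal L_k$ of length at most $\tilde c^k(W)$, whence $\tilde c^k(W)\ge T^\star$.

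Multiplying by the fixed factor $N^{\text{harv},k}$ then gives that every admissible policy costs at least $N^{\text{harv},k}T^\star$, while the group policy that walks the single shortest tour $R^\star$ attains exactly $N^{\text{harv},k}T^\star$. Hence group travel is optimal, and any genuine distribution of harvesters to different fields—being a multi-excursion walk—costs strictly more as soon as the triangle inequality is strict on the bypassed depot legs. This is precisely the assertion of the proposition.

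I expect the main obstacle to be making the shortcutting rigorous rather than pictorial: one has to argue that an arbitrary admissible closed walk, possibly with several depot returns and repeated field visits, can be transformed into a simple tour of $\{d\}\cup\mathcal L_k$ by a finite sequence of triangle-inequality replacements, each of which does not increase $\tilde c^k$. The single depot is what makes this clean, since every excursion shares the same splice point $d$, so the replacements $i\to d\to j\ \Rightarrow\ i\to j$ are always available; with several distinct depots the same collapse need not be cost-improving, which is exactly why the proposition is stated for one depot and why the multi-depot case must be treated separately.
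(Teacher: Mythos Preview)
Your proposal is correct and follows essentially the same idea as the paper: both hinge on the observation that under \eqref{eq_def_cijk}--\eqref{eq_def_cdjk} every traversed edge is billed at the fixed group rate $N^{\text{harv},k}\tilde c_{ij}^k$, so the problem reduces to finding a single cheapest closed route through $\{d\}\cup\mathcal L_k$, which the whole team then follows. The paper's proof is terser---it simply asserts that a cost-minimising route has exactly two depot-incident edges (nonnegativity plus uniform scaling) and that dispatching harvesters elsewhere is therefore suboptimal---while you make the comparison explicit by modelling any distributed policy as a multi-excursion closed walk and collapsing it to a single tour via the triangle inequality (an assumption the paper states in Section~\ref{subsec_notation} but does not invoke in its own proof). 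Your shortcutting argument is the cleaner justification of the step the paper leaves implicit, and your closing remark about why a single splice point $d$ is essential anticipates precisely the multi-depot refinement the paper handles separately via \eqref{eq_def_cdjkmin}.
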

   
\begin{proof}
The proof is by construction. For the asymmetric case with different fields ripening at different times, a unique optimal working sequence is already implied. For the symmetric case, a cost-minimizing route exists and includes exactly two edges incident to the depot vertex. This is due to nonnegativity of traveling costs and the fact that cost coefficients \eqref{eq_def_cijk} and \eqref{eq_def_cdjk} are invariant to the number of harvesters traveling along inter-field paths (uniform scaling by $N^{\text{harv},k}$). Any other initial distribution of harvesters to fields not connected to the depot vertex along the two aforementioned edges is thus suboptimal and harvester group travel is thus an optimal policy.         
\end{proof}

Several comments are made. First, accumulated inter-field path length minimisation (invariant to the number of harvesters traveling along the path) is of interest for minimisation of total non-harvesting time. By nonnegativity of edge path lengths there exists a shortest path crop-tour. Since harvesters can always work in parallel on fields since they are not constrained by each other, this shortest path should be followed by all harvesters. 

Second, harvester group travel bears more practical advantages. In general, SUs must ideally be operated such that PUs (harvesters) can operate continuously, such that any waiting times due to absent SUs for unloading are avoided. In general, the rate at which harvesters are filled is not easily predictable due to varying crop returns even within one field. Therefore, the concentration of all SUs to one field is beneficial for robustness in the sense that multiple harvesters can be served (instead of specific SU-PU couples), according to short-term freed capacities. An additional advantage is the facilitated supervision by the farm-manager.

Third, Proposition \ref{rmk_HarvestersAsGroupOneDepot} assumes a single depot. In order to differentiate between the cases that (a) all available harvesters are initially located at one specific depot, and that (b) all available harvesters are initially distributed among multiple depots, $c_{dj}^{k,k_\text{min}}$ is defined according to \eqref{eq_def_cdjkmin} in contrast to $c_{dj}^{k}$ in \eqref{eq_def_cdjk}. Then, in the IP formulations of Sect. \ref{subsec_8IPs} harvester group travel is assumed \emph{from the first field on}, and consequently using cost coefficients according to \eqref{eq_def_cijk}. Note that such harvester group travel starting from the first field of a crop-tour is practical. This is since a timely agreement upon harvest-start, e.g., a day ahead,  permits that all harvesters (from different depots) plan their travel in time and consequently start field and route coverage coordinatedly and together.

Fourth, it is remarked that if different crops have different non-overlapping harvesting times, then the same harvesters can in principle be employed sequentially for the different crop-tours. In this paper, this scenario is assumed and we order crops in $\mathcal{K}$ such that a lower index indicates an earlier harvesting time. For an application example, consider the crops in order barley, rapeseed and wheat. The alternative scenario is that harvesting times are overlapping or different crops require entirely different harvester machinery. This latter scenario is left for future work and can be approached by partitioning of harvester groups, for example, weighted (a) according to overall crop-area, or (b) according to the predicted total crop-harversing time.

Finally, note that throughout Sect. \ref{sec_ProblApproach} for the subsequent derivation of IPs it is assumed that \emph{all} fields must be served. The relaxation of this assumption is treated in Sect. \ref{sec_Extensions} when discussing financial considerations regarding the leasing of subsets of fields.

% ---------------------------------------------------
\subsection{Eight Integer Linear Programs\label{subsec_8IPs}}

Eight different IPs are formulated for the eight different problem setups  considered to be most relevant for harvest planning. First, these problem are stated only literally. Relevant constraints are summarised compactly, whereby it is distinguished between constraints used in final numerical simulations and additional constraints. Then, for brevity only the two most general IPs are stated mathematically. Nevertheless, in the final numerical experiments of Sect. \ref{sec_numerical_experiments} all eight IPs are evaluated.  

% ---------------------------------------------
\subsubsection{Literal Formulations}

Eight integer linear programs, \textsf{IP}-1, \dots, \textsf{IP}-8, are summarised below.

\begin{itemize}

\item[\textsf{IP}-1] There is a single depot, $D=1$, from which all harvesters start and to which all harvesters return after each crop-tour. Any subset of $K$ crops can be used for planting.

\item[\textsf{IP}-2] There are multiple depots $D\geq 1$. After every crop-tour, harvesters must return to the depot from which they started. Any subset of $K$ crops can be used for planting. 

\item[\textsf{IP}-3] Among $D>1$ potential depots, the best depot w.r.t. a cost criterion is selected. Then, all available harvesters are assigned to this best depot. Note that this problem could also be addressed by separately solving \textsf{IP}-1 for each of the $D$ depots and then selecting the best solution. However, here it is solved by a single IP-formulation. Any subset of $K$ crops can be used for planting. 

\item[\textsf{IP}-4] Multiple harvesters are initially located at multiple depots $D\geq 1$. These harvesters assemble at the first field of the first crop-tour. Then, all these harvesters travel as a group for the remainder of the first and all remaining crop-tours. Then, only after coverage of the last field of the last crop-tour these harvesters return to their initial depots. For $D=1$, \textsf{IP}-4 is identical to \textsf{IP}-2. For $D>1$, for all crop-tours except the first crop-tour \textsf{IP}-4 requires to select an optimal depot similarly to \textsf{IP}-3. Any subset of $K$ crops can be used for planting.
 
\item[\textsf{IP}-$m$] Like \textsf{IP}-$n$ for all $n=1, \dots,4$, but \emph{all} of the $K$ crops must now be used: \textsf{IP}-$m$ for  $m=5, \dots,8$. 

\end{itemize}

% ---------------------------------------------
\subsubsection{Constraints and Modeling}

In order to realise the literal problem formulations above, objective function and constraints are compactly summarised. 

\begin{enumerate}

\item The IP \emph{cost function} may include accumulated edge costs $c_{ij}^k x_{ij}^k$, negative profits $-r_l^k \delta_l^k$, depot maintenance costs $z^d \xi^d$ and crop costs $m \gamma$, whereby $c_{ij}^k$, $r_l^k$, $z^d$ and $m$ denote the predicted data  which must be assumed at the time of harvest planning.

\item \emph{Degree equations for fields} ensure that fields are visited exactly once per crop-tour. For example, for \textsf{IP}-3 with symmetric edge costs the degree equations read: $\sum_{d\in\mathcal{D}} x_{dl}^k + \sum_{i<l} x_{il}^k + \sum_{l<j} x_{lj}^k = 2 \delta_{l}^k,~\forall  l\in\mathcal{L},~\forall k\in\mathcal{K}$. For an asymmetric formulation, degree equations are split into two, differentiating an in-degree (sum of arcs entering) and out-degree (sum of arcs leaving) equation, both equal to $\delta_{l}^k,~\forall  l\in\mathcal{L},~\forall k\in\mathcal{K}$. 

\item \emph{Degree equations for depots} are conceptually identical to degree equations for fields. However, the cardinality of depot-vertices is not necessarily two. Furthermore, the resulting degree equations may be nonlinear in the original optimisation variables. Then, in order to formulate linear IPs, nonlinear degree equations can be rendered linear by the introduction of (a) auxiliary variables, and (b) the application of \emph{logical constraints} (specified below) which introduce additional linear inequality constraints. For example, for \textsf{IP}-3 the nonlinear degree equations are: $\sum_{k\in\mathcal{K}} \sum_{j\in\mathcal{L}} x_{dj}^k = 2\gamma \xi^d,~\forall d\in\mathcal{D}$, with $x_{dj}^k\in\{0,1,2\}$, $\xi^d\in\{0,1\}$, $1\leq \gamma\leq K$ and $\sum_{d\in\mathcal{D}}\xi^d = 1$. 

\item \emph{Uniqueness of crop-assignments to fields} is guaranteed through constraints $\sum_{k\in\mathcal{K}} \delta_l^k=1,~\forall l\in\mathcal{L}$.
 
\item Three classes of \emph{logical constraints} are of particular interest. They can be translated into integer linear inequalities according to \cite{williams2013model}. Because of their importance, three logical constraints are here repeated. Let $\epsilon>0$ be a small number (e.g., the machine precision), $b,b_1,b_2,b_3\in\{0,1\}$, $y\in\mathbb{R}$, and $f(x)$ such that $f:\mathbb{R}^{n_x}\rightarrow \mathbb{R}$ is linear, $n_x$ the variable dimension, $f^\text{max}=\max_{x\in\mathcal{X}} f(x)$ and $f^\text{min} = \min_{x\in\mathcal{X}} f(x)$, where $\mathcal{X}$ is a given bounded set.
\begin{enumerate}
\item  The statement ``$b=1$ if and only if $f(x)\leq 0$ and $b=0$ otherwise'' is equivalent to %``$f(x)\leq f^\text{max}(1-b),~f(x) \geq \epsilon + (f^\text{min}-\epsilon) b$''.
\begin{equation}
f(x)\leq f^\text{max}(1-b),\quad f(x) \geq \epsilon + (f^\text{min}-\epsilon) b. \label{eq_logic1}
\end{equation} 
\item The statement ``$b_3=1$ if and only if $b_1=1$ and $b_2=1$, and $b_3=0$ otherwise'' is equivalent to~$b_3=b_1b_2$ and is equivalent to %``$b_1 + b_2 -b_3 \leq 1,~b_3 \leq b_1,~b_3 \leq b_2$''.
\begin{equation}
b_1 + b_2 -b_3 \leq 1,\quad b_3 \leq b_1, \quad b_3 \leq b_2.\label{eq_logic2} 
\end{equation}

\item The statement ``$y=f(x)$ if $b=1$ and $y=0$ otherwise'' is equivalent to $y=b f(x)$ and is equivalent to %``$y\leq f^\text{max} b,~y\geq f^\text{min} b,~y\leq f(x) - f^\text{min}(1-b),~y\geq f(x)-f^\text{max}(1-b)$''.
\begin{subequations}
\begin{align}
& y\geq f^\text{min} b,\quad  y\leq f(x) - f^\text{min}(1-b),\\
& y\leq f^\text{max} b,\quad y\geq f(x)-f^\text{max}(1-b).
\end{align}
\label{eq_logic3}
\end{subequations}
\end{enumerate}

\item Under the assumption of a symmetric formulation (with undirected edges instead of directed arcs) and dropping crop-index $k$ for generality, the \emph{subtour elimination constraints} (SECs) according to \cite{laporte1992traveling} are given by $\sum_{i<j;i,j\in S} x_{ij} \leq |S|-1,~3\leq |S|\leq N-3,~\forall S\subseteq \mathcal{V}\backslash\{0,N-1\}$. Note that there is an exponential number of SECs.

\end{enumerate}

Additional constraints are summarised as follows.

\begin{enumerate}

\item \emph{Crop rotation constraints} (see \cite{havlin1990crop}), and constraints similarly related to soil considerations where specific soils only admit specific crops can be formulated as equality constraints, $\delta_l^k=0$, for prohibited combinations of specific field $l$ and crop $k$. 

\item \emph{Diversification constraints} read $\sum_{l\in\mathcal{L}} g_l^k \delta_l^k \leq G^k,~\forall k=0,\dots,K-2$, with $g_l^k\geq 0$ denoting weights (for example the hectares-coverage or required production means for field $l$ and crop $k$) and $G^k\geq 0$ the corresponding crop-related bounds. Crop indexed by $k=K-1$ is left unconstrained for feasibility. In general, when combining both hard and inequality constraints without additional precaution, feasibility of the resulting IP cannot be guaranteed. Infeasibility results if these constraints enforce $\sum_{k\in\mathcal{K}} \delta_l^k=0$.

\item \emph{Time constraints} can be formulated as $\sum_{d\in\mathcal{D}}\sum_{j\in\mathcal{L}} h_{dj}^k x_{dj}^k + \sum_{i<j} h_{ij}^k x_{ij}^k \leq T_\text{win}^k - \sum_{l\in\mathcal{L}} T_l^{\text{harv},k} \delta_l^k,~k\in\mathcal{K}$, where, for generality, the multi-depot case is assumed, and where $h_{dj}^k$ and $h_{ij}^k$ may denote travel time along corresponding edges, $T_\text{win}^k$ the length of the harvesting window for each crop $k$ (e.g., multiple days), and $T_l^{\text{harv},k}$ the required harvesting time (e.g., inversely proportional to the number of used harvesters) per field $l$ and crop $k$. In practice, optimal harvesting time windows may be very short due to weather constraints.

\item To account for a priori experience about different sequences in ripeness of fields, \emph{priority constraints} can be formulated. For example, relating to uncertainties, the sequence in which fields of the same crop ripe may vary, e.g., due to hillsides and varying soil. W.l.o.g., consider a statement such as ``if fields $a$ and $b$ are among the ones assigned to crop $k$, then the corresponding sequence for harvest shall be in order such that $a$ is harvested immediately after $b$''. This can be modeled as nonlinear constraint $x_{ba}^k=\delta_b^k \delta_a^k$, and can therefore be translated to linear inequalities by means of \eqref{eq_logic2}. Note that an asymmetric formulation has to be employed for all connections between vertices for which priorities are  defined. For above example, we therefore require, e.g., $x_{ba}^k\neq x_{ab}^k$. 

\end{enumerate}

% ---------------------------------------------
\subsubsection{Two Concrete IP formulations\label{subsubsec_2concreteIPformulations}}

The two most relevant IP formulations, \textsf{IP}-3 and \textsf{IP}-4, are stated. These are most relevant since they can be simplified to all six remaining IPs. For \textsf{IP}-1, $D=1$ is enforced in \textsf{IP}-3. For \textsf{IP}-2 cost coefficients from \textsf{IP}-1 must be adapted. For \textsf{IP}-5 until \textsf{IP}-8, the inclusion of all $K$ crops is enforced. Accordingly, their counterparts \textsf{IP}-1 until \textsf{IP}-4 simplify.

For \textsf{IP}-3 the following is proposed:
\begin{subequations}
\label{eq:ILP3}
\begin{align}
\min &\ \ \sum_{d\in\mathcal{D}} \sum_{k\in\mathcal{K}} \sum_{j\in\mathcal{L}} c_{dj}^k x_{dj}^k + \sum_{k\in\mathcal{K}} \sum_{i<j} c_{ij}^k x_{ij}^k -\sum_{l\in\mathcal{L}}\sum_{k\in\mathcal{K}}r_l^k\delta_l^k + \notag \\ & \ \  \sum_{d\in\mathcal{D}} z^d \xi^d + m \gamma \label{eq_ILP3_objFcn}\\
\mathrm{s.t.} &\ \ \sum_{d\in\mathcal{D}} x_{dl}^k + \sum_{i<l} x_{il}^k + \sum_{l<j} x_{lj}^k = 2 \delta_{l}^k,~\forall  l\in\mathcal{L},~\forall k\in\mathcal{K},\label{eq_ILP3_1stcstrt}\\
& \ \ \sum_{k\in\mathcal{K}} \delta_l^k=1,~\forall l\in\mathcal{L}, \label{eq_ILP3_2ndcstrt}\\
& \ \ \sum_{k\in\mathcal{K}}\sum_{j\in\mathcal{L}} x_{dj}^k= 2 p^d,~\forall d\in\mathcal{D}, \\
& \ \ \sum_{d\in\mathcal{D}} \xi^d= 1, \label{eq_ILP3_sumxid_1}\\
& \ \ \xi^d  \leq p^d \leq K \xi^d,~\forall d\in\mathcal{D},\label{eq_ILP3_pd_1}\\
& \ \ p^d  \leq \gamma - (1-\xi^d),~\forall d\in\mathcal{D},\label{eq_ILP3_pd_2}\\
& \ \ p^d  \geq \gamma - K(1-\xi^d),~\forall d\in\mathcal{D},\label{eq_ILP3_pd_3}\\
& \ \ \sum_{i,j\in S^k,i<j} x_{ij}^k \leq |S^k|-1,~S^k\subseteq \mathcal{V}\backslash\{d\},~3\leq |S^k|\leq N-1,\notag\\[-0.3cm]
& \ \ \hspace{4.1cm} ~\forall k\in\mathcal{K},~\forall d\in\mathcal{D}, 
\end{align}
\end{subequations} 
with decision variables 
\begin{subequations}
\begin{align}
& \ \ x_{dj}^k\in\{0,1,2\},~\forall d\in\mathcal{D},~\forall j\in\mathcal{L},~\forall k\in\mathcal{K}, \\
& \ \ x_{ij}^k\in\{0,1\},~0\leq i<j,~\forall k\in\mathcal{K}, \\
& \ \ \delta_l^k\in\{0,1\},~\forall l\in\mathcal{L},~\forall k\in\mathcal{K},  \\
& \ \ 1\leq \gamma \leq K, \\
& \ \ \xi^d\in\{0,1\},~\forall d\in\mathcal{D},\\
& \ \ p^d\in\{0,1,\dots,K\},~\forall d\in\mathcal{D},
\end{align}
\end{subequations} 
Thus, \textsf{IP}-3 has  $N_z = KDL + K \sum_{q=0}^{L-2}L-1-q + KL + 1 + 2D$ integer decision variables. It differs from \textsf{IP}-1 by $D>1$. Therefore, the decision about starting from the optimal depot w.r.t. the given cost criterion is modeled as constraints
\begin{equation}
\sum_{k\in\mathcal{K}} \sum_{j\in\mathcal{L}} x_{dj}^k = 2\gamma \xi^d,~\forall d\in\mathcal{D},\label{eq_ILP3_nonlinConstrt}
\end{equation}
with $x_{dj}^k\in\{0,1,2\}$, $\xi^d\in\{0,1\}$, $1\leq \gamma\leq K$ and $\sum_{d\in\mathcal{D}}\xi^d = 1$. Since \eqref{eq_ILP3_nonlinConstrt} is nonlinear, auxiliary variables $p^d = \gamma \xi^d,~\forall d\in\mathcal{D}$ are introduced. Then, \eqref{eq_ILP3_nonlinConstrt} can be translated to linear inequality constraints \eqref{eq_ILP3_pd_1}, \eqref{eq_ILP3_pd_2} and \eqref{eq_ILP3_pd_3} following \eqref{eq_logic3}. Finally, note that from \eqref{eq_ILP3_sumxid_1} and the definition of $p^d$, the number of decision variables can be reduced by $\gamma$ when substituting $\gamma=\sum_{d\in\mathcal{D}} p^d$ in \eqref{eq:ILP3}.

For \textsf{IP}-4 the following is proposed: 
%\small
\begin{subequations}
\label{eq:ILP4}
\begin{align}
\min &\ \  \sum_{d\in\mathcal{D}} \sum_{k\in\mathcal{K}} \sum_{j\in\mathcal{L}} c_{dj}^{k,k^\text{min}} v_{dj}^k + c_{dj}^{k} x_{dj}^k - c_{dj}^{k}v_{dj}^k  + \notag\\
 & \ \ \sum_{k\in\mathcal{K}} \sum_{i<j} c_{ij}^k x_{ij}^k -  \sum_{l\in\mathcal{L}}\sum_{k\in\mathcal{K}}r_l^k\delta_l^k + \sum_{d\in\mathcal{D}} z^d \xi^d \notag\\
 & \ \  \sum_{d\in\mathcal{D}} \sum_{k\in\mathcal{K}} \sum_{j\in\mathcal{L}}  c_{jd}^{k,k^\text{max}} w_{jd}^k + c_{jd}^{k} x_{jd}^k  - c_{jd}^{k} w_{jd}^k + m \gamma \label{eq_ILP4_objFcn}\\
\mathrm{s.t.} &\ \ \sum_{d\in\mathcal{D}} x_{dl}^k + \sum_{i<l} x_{il}^k + \sum_{l<j} x_{lj}^k + \sum_{d\in\mathcal{D}} x_{ld}^k = 2 \delta_{l}^k,~\forall  l\in\mathcal{L}, ~\forall k\in\mathcal{K},\label{eq_ILP4_1stcstrt}\\
& \ \ \sum_{k\in\mathcal{K}} \delta_l^k=1,~\forall l\in\mathcal{L}, \\
& \ \ \sum_{k\in\mathcal{K}}\sum_{j\in\mathcal{L}} x_{dj}^k + x_{jd}^k = 2p^d,~\forall d\in\mathcal{D}, \\
& \ \ \sum_{d\in\mathcal{D}} \xi^d = 1,~\sum_{k\in\mathcal{K}} \tilde{a}^k=1,~\sum_{k\in\mathcal{K}} \tilde{\beta}^k = 1, \\
& \ \ \tilde{\alpha}^0 = \alpha^0,~\tilde{\beta}^{K-1} = \alpha^{K-1},\\
& \ \ \sum_{d\in\mathcal{D}}\sum_{j\in\mathcal{L}}\sum_{k\in\mathcal{K}}v_{dj}^k=1,~\sum_{d\in\mathcal{D}}\sum_{j\in\mathcal{L}}\sum_{k\in\mathcal{K}}w_{jd}^k=1\\
& \ \ \sum_{j\in\mathcal{L}} x_{dj}^k = \xi^d,~\forall d\in\mathcal{D},~\forall k\in\mathcal{K},\\
& \ \ \sum_{j\in\mathcal{L}} x_{jd}^k = \xi^d,~\forall d\in\mathcal{D},~\forall k\in\mathcal{K},\\
& \ \ \xi^d  \leq p^d \leq K,~\forall d\in\mathcal{D},\label{eq_ILP4_pd_1}\\
& \ \ p^d  \leq \gamma - (1-\xi^d),~\forall d\in\mathcal{D},\label{eq_ILP4_pd_2}\\
& \ \ p^d  \geq \gamma - K(1-\xi^d),~\forall d\in\mathcal{D},\label{eq_ILP4_pd_3}\\
& \ \ 1 - \sum_{l\in\mathcal{L}} \delta_l^k \leq 1-\alpha^k,~\forall k\in\mathcal{K},\label{eq_ILP4_alpha1}\\
& \ \ 1 - \sum_{l\in\mathcal{L}} \delta_l^k \geq \epsilon + (-L+1 - \epsilon)\alpha^k,~\forall k\in\mathcal{K}, \label{eq_ILP4_alpha2}\\
& \ \ \alpha^k + (1-\sum_{\tau=0}^{k-1}\tilde{\alpha}^{\tau}) - \tilde{\alpha}^k  \leq 1,~\forall k=1,\dots,K-1,\label{eq_ILP4_tildealphak_1}\\
& \ \ \tilde{\alpha}^k \leq \alpha^k,~\tilde{\alpha}^k \leq 1-\sum_{\tau=0}^{k-1}\tilde{\alpha}^{\tau},~\forall k=1,\dots,K-1. \label{eq_ILP4_tildealphak_2}\\
& \ \ \alpha^{K-2-k} + (1-\sum_{\tau=1}^{1+k}\tilde{\beta}^{K-2-k+\tau}) - \tilde{\beta}^{K-2-k}  \leq 1,\notag\\[-0.4cm]
& \ \ \hspace{3.8cm}~\forall k=0,\dots,K-2,\label{eq_ILP4_tildebetak_1}\\
& \ \ \tilde{\beta}^{K-2-k} \leq \alpha^{K-2-k},~\forall k=0,\dots,K-2,\label{eq_ILP4_tildebetak_2}\\
& \ \ \tilde{\beta}^{K-2-k} \leq 1-\sum_{\tau=1}^{1+k}\tilde{\beta}^{K-2-k+\tau},~\forall k=0,\dots,K-2,\label{eq_ILP4_tildebetak_3}\\ 
& \ \  \tilde{\alpha}^k + x_{dj}^k - v_{dj}^k \leq 1,~v_{dj}^k\leq \tilde{\alpha}^k,~v_{dj}^{k}\leq x_{dj}^k,\notag\\[-0.cm]
& \ \ \hspace{2.9cm} ~\forall d\in\mathcal{D},~\forall j\in\mathcal{L},~\forall k\in\mathcal{K},\\
& \ \  \tilde{\beta}^k + x_{jd}^k - w_{jd}^k \leq 1,~w_{jd}^k\leq \tilde{\beta}^k,~w_{jd}^{k}\leq x_{jd}^k,\notag\\[-0.cm]
& \ \ \hspace{2.8cm} ~\forall d\in\mathcal{D},~\forall j\in\mathcal{L},~\forall k\in\mathcal{K},\\
& \ \ \sum_{i,j\in S^k,~i<j} x_{ij}^k \leq |S^k|-1,~S^k\subseteq \mathcal{V}\backslash\{d\},~3\leq |S^k|\leq N-1,\notag\\[-0.3cm]
& \ \ \hspace{3.5cm} ~\forall k\in\mathcal{K},~\forall d \in\mathcal{D},
\end{align}
\end{subequations} %\normalsize
with decision variables
%\small
\begin{subequations}
\label{eq:ILP4_variables}
\begin{align}
& \ \ x_{dj}^k\in\{0,1\},~\forall d\in\mathcal{D},~\forall j\in\mathcal{L},~\forall k\in\mathcal{K}, \\
& \ \ x_{ij}^k\in\{0,1\},~0\leq i<j,~\forall k\in\mathcal{K}, \\
& \ \ \delta_l^k\in\{0,1\},~\forall l\in\mathcal{L},~\forall k\in\mathcal{K},  \\
& \ \ 1\leq \gamma \leq K, \\
& \ \ \xi^d\in\{0,1\},~\forall d\in\mathcal{D},\\
& \ \ p^d\in\{0,1,\dots,K\},~\forall d\in\mathcal{D},\\
& \ \ \alpha^k,\tilde{\alpha}^k,\tilde{\beta}^k\in\{0,1\},~\forall k\in\mathcal{K},\\
& \ \ v_{dj}^k,w_{jd}^k\in\{0,1\},~\forall d\in\mathcal{D},~\forall j\in\mathcal{L},~\forall k\in\mathcal{K},\\
& \ \ x_{jd}^k\in\{0,1\},~\forall d\in\mathcal{D},~\forall j\in\mathcal{L},~\forall k\in\mathcal{K}.
\end{align}
\end{subequations} %\normalsize
 For the formulation of crop- and depot-dependent cost coefficients, the minimum and maximum active crop-indices need to be identified. Let therefore $\alpha^k\in\{0,1\}$ indicate if crop $k$ is active in the sense of $\alpha^k=1$ if $\sum_{l\in\mathcal{L}} \delta_l^k\geq 1$. By \eqref{eq_logic1}, this translates to \eqref{eq_ILP4_alpha1} and \eqref{eq_ILP4_alpha2}. 
We introduced auxiliary variables $\tilde{\alpha}^k,\tilde{\beta}^k \in\{0,1\}$ indicating if crop $k$ is the smallest- or largest-indexed active crop, respectively ($k=k^\text{min}$ and $k=k^\text{max}$). It holds that $\sum_{k\in\mathcal{K}} \tilde{\alpha}^k = 1$ and $\sum_{k\in\mathcal{K}} \tilde{\beta}^k=1$. We then derive the nonlinear relations $\tilde{\alpha}^0= \alpha^0$, $\tilde{\alpha}^1 = \alpha^1 (1-\tilde{\alpha}^0)$, $\tilde{\alpha}^2 = \alpha^2(1-\tilde{\alpha}^1-\tilde{\alpha}^0)$, \dots, which can be translated to
\begin{subequations}
\begin{align}
&\tilde{\alpha}^0 = \alpha^0,\\
&\alpha^k + (1-\sum_{\tau=0}^{k-1}\tilde{\alpha}^{\tau}) - \tilde{\alpha}^k  \leq 1,~\forall k=1,\dots,K-1,\\
&\tilde{\alpha}^k \leq \alpha^k,~\tilde{\alpha}^k \leq 1-\sum_{\tau=0}^{k-1}\tilde{\alpha}^{\tau},~\forall k=1,\dots,K-1.
\end{align}
\end{subequations}
Similarly, starting the iteration from highest $k=K-1$ with $\tilde{\beta}^{K-1} = \alpha^{K-1}$, we can derive nonlinear relations for $\tilde{\beta}^k$ to ultimately obtain \eqref{eq_ILP4_tildebetak_1}, \eqref{eq_ILP4_tildebetak_2} and \eqref{eq_ILP4_tildebetak_3}. Suppose the \emph{path-dependent} part of the cost function taking the nonlinear form $\sum_{d\in\mathcal{D}} \sum_{k\in\mathcal{K}} \sum_{j\in\mathcal{L}} \left( c_{dj}^{k,k^\text{min}} \tilde{\alpha}^k + c_{dj}^{k}(1-\tilde{\alpha}^k) \right) x_{dj}^k+$\newline  $\sum_{d\in\mathcal{D}} \sum_{k\in\mathcal{K}} \sum_{j\in\mathcal{L}} \left( c_{jd}^{k,k^\text{max}} \tilde{\beta}^k + c_{jd}^{k}(1-\tilde{\beta}^k) \right) x_{jd}^k $, with $c_{dj}^{k,k^\text{min}}\geq 0$ and  further $c_{jd}^{k,k^\text{max}}\geq 0$ denoting cost-coefficients that are distinct for the first (i.e., $k=k^\text{min}$ or $\tilde{\alpha}^k=1$) and last (i.e., $k=k^\text{max}$ or $\tilde{\beta}^k=1$) crop-route, respectively. Then, auxiliary variables $v_{dj}^k\in\{0,1\}$ and $w_{jd}^k\in\{0,1\}$ need to be introduced with $\sum_{d\in\mathcal{D}}\sum_{j\in\mathcal{L}}\sum_{k\in\mathcal{K}}v_{dj}^k=1$ and $\sum_{d\in\mathcal{D}}\sum_{j\in\mathcal{L}}\sum_{k\in\mathcal{K}}w_{jd}^k=1$. They are related according to $v_{dj}^k = \tilde{\alpha}^k x_{dj}^k$ and $w_{jd}^k=\tilde{\beta}^k x_{jd}^k,~\forall d\in\mathcal{D},~j\in\mathcal{L},~k\in\mathcal{K}$, and can be translated to integer linear inequalities according to \eqref{eq_logic2}. The objective function part above can now be expressed linearly dependent on decision variables, see \eqref{eq_ILP4_objFcn}.

%%%%%%%%%%%%%%%%%%%%%%%%%
\subsection{Main Algorithm\label{subsec_MainAlg}}

The main algorithm of this paper is summarised as follows:

%\begin{table}
%\vspace{0.4cm}\begin{longtable}{ l l }
%\centering
\vspace{0.4cm}
\hspace{-0.3cm}\begin{tabular}{ l l }
\hline \\[-8pt]
 \multicolumn{2}{l}{\textbf{Algorithm 1}: \textsf{CApR}-$n$}\\[3pt]
\hline
1: & \textbf{Input}: $c_{dj}^k$, $c_{ij}^k$, $c_{dj}^{k,k_\text{min}}$, $c_{jd}^{k,k_\text{max}}$, $c_{jd}^k$, $r_l^k$, $m$, $\{z^d\}_{d=0}^{D-1}$ and $\tilde{k}$.\\
2: & \textbf{Clustering}:\\
& - cluster $L$ fields according to an arbitrary criterion, \\ & e.g., spatially based according to $\tilde{k}$-means.\\
& - let the sets of fields associated with each cluster be\\
& denoted by $\mathcal{L}_{z,\xi}\subset \mathcal{L},~\forall \xi=0,\dots,\tilde{k}-1$.\\
& - let the set of clusters be denoted by $\mathcal{L}_{z}$ with $|\mathcal{L}_{z}|=\tilde{k}$.\\
& - assign a coordinate to each cluster (centroids).\\
& - compute $c_{dj_z}^k$, $c_{i_z j_z}^k$, $c_{dj_z}^{k,k_\text{min}}$, $c_{j_z d}^{k,k_\text{max}}$, $c_{j_z d}^k,~\forall i_z,j_z\in\mathcal{L}_{z}$.\\
& - compute $r_{l_z}^k = \sum_{l\in\mathcal{L}_{z,\xi}} r_l^k,~\forall l_z\in\mathcal{L}_{z},~\xi=0,\dots,\tilde{k}-1$.\\[5pt]
3: & \textbf{Integer Programming}  (\textsf{IP}-$n$):\\
& - solve \textsf{IP}-$n$ for the clustering result of Step 2, re-\\
&  placing $\mathcal{L}$ by $\mathcal{L}_z$ and cost coefficients accordingly.\\
& - let the resulting set of active crops and optimal \\
&  basis depot be denoted $\mathcal{M}^\star\subseteq \mathcal{K}$ and $d^\star\in\mathcal{D}$.\\
& - let $\mathcal{C}^k$ denote the sequence of clusters $\forall k\in\mathcal{M}^\star$,  \\
& where every sequence starts and ends at $d^\star\in\mathcal{D}$.\\[5pt]
4: & \textbf{From Cluster- to Field-sequences}:\\
& FOR $k\in\mathcal{M}^\star$:\\
& \hspace*{0.4cm} - define $\mathcal{C}^{k,1}=\mathcal{C}^k$ and $\mathcal{C}^{k,2}=\text{flip}\left( \mathcal{C}^k \right)$.\\
& \hspace*{0.4cm} FOR $i=1,2$:\\
& \hspace*{0.8cm} FOR $\mathcal{C}^{k,i}$:\\
& \hspace*{1.2cm} - find the closest fields between any pair \\ 
& \hspace*{1.2cm} of consecutive clusters $c^{(t)},c^{(t+1)}\in\mathcal{C}^{k,i}$  \\
& \hspace*{1.2cm}  within the $\mathcal{C}^{k,i}$-tour, where $t=0,\dots,|\mathcal{C}^{k,i}|$.\\
& \hspace*{1.2cm} - let the two fields associated with each \\
& \hspace*{1.2cm} cluster $c^{(t)}$ be  denoted by $s^{(t)}$ and $e^{(t)}$.\\ 
& \hspace*{1.2cm} - for each cluster $c^{(t)},\forall t$, solve a TSP \\
& \hspace*{1.2cm} connecting $s^{(t)}$ and $e^{(t)}$ to obtain a \\
& \hspace*{1.2cm} field-sequence $f^{(t)} = \{s^{(t)},\dots,e^{(t)}\}$.\\
& \hspace*{1.2cm} - concatenate field-sequences to crop-tour \\
& \hspace*{1.2cm} $\mathcal{F}^{k,i} = \{f^{(0)},\dots, f^{(|\mathcal{C}^{k,i}|)}\}$  and compute its \\
& \hspace*{1.2cm} path length $d^{k,i}$.\\
& \hspace*{0.4cm} IF $d^{k,1}<d^{k,2}$: $\mathcal{F}^{k,\star}=\mathcal{F}^{k,1}$, else $\mathcal{F}^{k,\star}=\mathcal{F}^{k,2}$.\\[5pt]
5: & \textbf{Output}:\\
& - set of active crops $\mathcal{M}^\star$, and basis depot $d^\star\in\mathcal{D}$.\\
& - crop assignment to fields, $\delta_l^{k,\star},~\forall l\in\mathcal{L},\forall k\in\mathcal{M}^\star$.\\
& - crop-tour $\mathcal{F}^{k,\star},~\forall k\in\mathcal{M}^\star$.\\
& - monetary result $J^{\textsf{CApR}-n}$.\\[3pt]\hline\\[3pt]
%
%\label{alg_CApR}
\end{tabular}
%\end{longtable}
%\end{table}

The name is derived from its purpose: \emph{Crop Assignment plus Routing} (\textsf{CApR}). See Figure \ref{fig:alg_CApR} for visualisation. The reversing of a list or sequence of elements is denoted by the $\text{flip}(\cdot)$-operator. The monetary profit is denoted by $J^{\textsf{CApR}-n}$ and is always composed of all of (a) revenues from growing crops, (b) accumulated edge costs, (c) relevant depot maintenance costs and (d) fixed crop costs. Thus, a large $J^{\textsf{CApR}-n}$ is beneficial. For clarification, for $\tilde{k}=L$ the profit $J^{\textsf{CApR}-n}$ is equal to the negative objective value of \textsf{IP}-$n$, $J^{\textsf{IP}-n}$, when all of (a) to (d) are accounted for. Analysis is provided next.

\newlength\figureheight
\newlength\figurewidth 
\setlength\figureheight{5.55cm}
\setlength\figurewidth{5.55cm}
\begin{figure*}
\centering%
% This file was created by matplotlib2tikz v0.6.0.
\begin{tikzpicture}

\definecolor{color0}{rgb}{1,1,0}

\begin{axis}[
xlabel={\small{$X$ [km]}},xlabel near ticks,
ylabel={\small{$Y$ [km]}}, ylabel near ticks,
xmin=-22, xmax=24.5,
ymin=-26, ymax=29.6,
axis on top,
width=\figurewidth,
height=\figureheight
]
\addplot [very thick, black, mark=*, mark size=3, mark options={solid,draw=black}, only marks]
table {%
5.54912717210443 -9.17976854024501
-5.36075647529545 -4.69183630537719
-7.13690413623632 4.97980950089336
};
\addplot [very thick, green!50.196078431372548!black, mark=*, mark size=1, mark options={solid}, only marks]
table {%
-1.81141816833373 -14.7536174356505
-8.03826228605862 -4.41702565665195
7.41410809258517 4.84885103731669
10.7776825823822 8.43493264112755
-17.2176624878364 9.71560159200989
-5.87182855358123 10.2605100264461
-11.1477491373684 0.636192570530905
-20.6200611942363 -3.4495698771771
-2.45562512969019 8.20397380459398
-3.30015579547515 -15.4415652878383
-10.1283971789637 9.99821818315607
-6.14207425605862 -1.79381909566184
-0.048513916390766 3.45066694356219
-3.45823196068543 -3.05501662981954
-7.63556351876465 20.137805506487
1.95147271965853 10.8328700425764
5.54847141398477 -0.0346969158307768
-9.60987375516025 -0.48188111659666
-1.95580284414657 -5.67635720788008
13.6567316035884 -16.610497068477
-1.19596435637135 -5.34569035326587
-9.11756264643322 -12.478354569907
0.83496501788204 -8.3644742808282
14.195518265761 1.13563781528158
7.43204266618025 -11.3768849099122
-5.96166247088939 -3.62475290243678
1.89203395151815 11.5478548640834
-3.74741403573997 -2.94613340487459
21.7340097735993 -4.0799871411908
-11.5251844141959 10.4266580162947
-10.0009381868691 25.7990314276531
11.8416707514425 -0.774527587990311
-11.5462196445152 -1.73188666151523
13.5144882290728 -9.45889421572373
7.80550327072996 13.4992160670252
13.3751420065552 2.88357960720043
13.9721262004041 -14.7087307855056
-4.14575446613101 -1.93192237625023
10.5426318584968 -1.66904393614652
-9.24595060001878 3.39247802508048
-13.8596192183418 9.24248650224519
8.31054036231326 -18.3461646132472
13.0682511786582 -23.6936636972141
1.38643881426368 23.030318975339
2.49050693879937 -10.2823001982673
1.10361795370018 8.13336720243664
1.55709347332873 -21.531526174017
-14.2732766350511 -12.9716742575041
-1.6713894155691 12.0007273925418
3.31056229212735 2.8134229350964
};
\draw[] (axis cs:5.54912717210443,-9.17976854024501) -- (axis cs:5.54912717210443,-9.17976854024501);
\node at (axis cs:5.54912717210443,-9.17976854024501)[
  scale=0.8,
  anchor=base west,
  text=black,
  rotate=0.0
]{ D0};
\draw[] (axis cs:-5.36075647529545,-4.69183630537719) -- (axis cs:-5.36075647529545,-4.69183630537719);
\node at (axis cs:-5.36075647529545,-4.69183630537719)[
  scale=0.8,
  anchor=base west,
  text=black,
  rotate=0.0
]{ D1};
\draw[] (axis cs:-7.13690413623632,4.97980950089336) -- (axis cs:-7.13690413623632,4.97980950089336);
\node at (axis cs:-7.13690413623632,4.97980950089336)[
  scale=0.8,
  anchor=base west,
  text=black,
  rotate=0.0
]{ D2};
\node at (axis cs:18.5,22.25)[%\node at (axis cs:10.5,22.25)[
  scale=1.5,
  anchor=base west,
  text=black,
  rotate=0.0
]{ \textsf{1}};
\end{axis}

\end{tikzpicture}\hspace{-0.1cm}% This file was created by matplotlib2tikz v0.6.0.
\begin{tikzpicture}

\definecolor{color1}{rgb}{0.46078431372549,0.0615609061339428,0.999525719713366}
\definecolor{color0}{rgb}{1,1,0}
\definecolor{color3}{rgb}{0.0372549019607843,0.664540178707858,0.934679767321111}
\definecolor{color2}{rgb}{0.249019607843137,0.384105749171926,0.980634770468978}
\definecolor{color5}{rgb}{0.394117647058823,0.986200747353403,0.763398282741103}
\definecolor{color4}{rgb}{0.182352941176471,0.878081248083698,0.859799851448372}
\definecolor{color7}{rgb}{0.817647058823529,0.878081248083698,0.510631193180907}
\definecolor{color6}{rgb}{0.605882352941177,0.986200747353403,0.645928062486787}
\definecolor{color9}{rgb}{1,0.384105749171926,0.195845467007167}
\definecolor{color8}{rgb}{1,0.664540178707858,0.355490833300318}
\definecolor{color10}{rgb}{1,0.0615609061339432,0.0307950585561705}

\begin{axis}[
xlabel={\small{$X$ [km]}},xlabel near ticks,
yticklabels={\empty},
xmin=-22, xmax=24.5,
ymin=-26, ymax=29.6,
axis on top,
width=\figurewidth,
height=\figureheight
]
%\addplot [very thick, color0, mark=*, mark size=3, mark options={solid,draw=black}, only marks]
%table {%
%5.54912717210443 -9.17976854024501
%-5.36075647529545 -4.69183630537719
%-7.13690413623632 4.97980950089336
%};
\addplot [very thick, black, mark=*, mark size=3, mark options={solid,draw=black}, only marks]
table {%
5.54912717210443 -9.17976854024501
-5.36075647529545 -4.69183630537719
-7.13690413623632 4.97980950089336
};
\addplot [very thick, color1, mark=+, mark size=3, mark options={solid}, only marks]
table {%
-5.4166876304567 22.9890519698264
};
\addplot [very thick, color1, mark=*, mark size=1, mark options={solid}, only marks]
table {%
-7.63556351876465 20.137805506487
-10.0009381868691 25.7990314276531
1.38643881426368 23.030318975339
};
\addplot [very thick, color2, mark=+, mark size=3, mark options={solid}, only marks]
table {%
-12.4339708662598 -0.326933411935521
};
\addplot [very thick, color2, mark=*, mark size=1, mark options={solid}, only marks]
table {%
-11.1477491373684 0.636192570530905
-20.6200611942363 -3.4495698771771
-9.60987375516025 -0.48188111659666
-11.5462196445152 -1.73188666151523
-9.24595060001878 3.39247802508048
};
\addplot [very thick, color2, mark=*, mark size=1, mark options={solid}, only marks]
table {%
-11.1477491373684 0.636192570530905
-20.6200611942363 -3.4495698771771
-9.60987375516025 -0.48188111659666
-11.5462196445152 -1.73188666151523
-9.24595060001878 3.39247802508048
};
\addplot [very thick, color3, mark=+, mark size=3, mark options={solid}, only marks]
table {%
2.49440513158438 11.9701670915567
};
\addplot [very thick, color3, mark=*, mark size=1, mark options={solid}, only marks]
table {%
1.95147271965853 10.8328700425764
1.89203395151815 11.5478548640834
7.80550327072996 13.4992160670252
-1.6713894155691 12.0007273925418
};
\addplot [very thick, color4, mark=+, mark size=3, mark options={solid}, only marks]
table {%
-0.792357470766928 6.57238818242706
};
\addplot [very thick, color4, mark=*, mark size=1, mark options={solid}, only marks]
table {%
-5.87182855358123 10.2605100264461
-2.45562512969019 8.20397380459398
-0.048513916390766 3.45066694356219
1.10361795370018 8.13336720243664
3.31056229212735 2.8134229350964
};
\addplot [very thick, color5, mark=+, mark size=3, mark options={solid}, only marks]
table {%
11.6590300400362 -15.6991392150133
};
\addplot [very thick, color5, mark=*, mark size=1, mark options={solid}, only marks]
table {%
13.6567316035884 -16.610497068477
7.43204266618025 -11.3768849099122
13.5144882290728 -9.45889421572373
13.9721262004041 -14.7087307855056
8.31054036231326 -18.3461646132472
13.0682511786582 -23.6936636972141
};
\addplot [very thick, color5, mark=*, mark size=1, mark options={solid}, only marks]
table {%
13.6567316035884 -16.610497068477
7.43204266618025 -11.3768849099122
13.5144882290728 -9.45889421572373
13.9721262004041 -14.7087307855056
8.31054036231326 -18.3461646132472
13.0682511786582 -23.6936636972141
};
\addplot [very thick, color6, mark=+, mark size=3, mark options={solid}, only marks]
table {%
9.53160719080081 2.89272448673347
};
\addplot [very thick, color6, mark=*, mark size=1, mark options={solid}, only marks]
table {%
7.41410809258517 4.84885103731669
10.7776825823822 8.43493264112755
5.54847141398477 -0.0346969158307768
13.3751420065552 2.88357960720043
10.5426318584968 -1.66904393614652
};
\addplot [very thick, color7, mark=+, mark size=3, mark options={solid}, only marks]
table {%
-13.1827158248344 9.84574107342646
};
\addplot [very thick, color7, mark=*, mark size=1, mark options={solid}, only marks]
table {%
-17.2176624878364 9.71560159200989
-10.1283971789637 9.99821818315607
-11.5251844141959 10.4266580162947
-13.8596192183418 9.24248650224519
};
\addplot [very thick, color8, mark=+, mark size=3, mark options={solid}, only marks]
table {%
-3.75668907313321 -4.12835465640768
};
\addplot [very thick, color8, mark=*, mark size=1, mark options={solid}, only marks]
table {%
-8.03826228605862 -4.41702565665195
-6.14207425605862 -1.79381909566184
-3.45823196068543 -3.05501662981954
-1.95580284414657 -5.67635720788008
-1.19596435637135 -5.34569035326587
0.83496501788204 -8.3644742808282
-5.96166247088939 -3.62475290243678
-3.74741403573997 -2.94613340487459
-4.14575446613101 -1.93192237625023
};

\addplot [very thick, color9, mark=+, mark size=3, mark options={solid}, only marks]
table {%
-4.07580213886085 -14.5765063205307
};
\addplot [very thick, color9, mark=*, mark size=1, mark options={solid}, only marks]
table {%
-1.81141816833373 -14.7536174356505
-3.30015579547515 -15.4415652878383
-9.11756264643322 -12.478354569907
2.49050693879937 -10.2823001982673
1.55709347332873 -21.531526174017
-14.2732766350511 -12.9716742575041
};

\addplot [very thick, color10, mark=+, mark size=3, mark options={solid}, only marks]
table {%
15.9237329302676 -1.23962563796651
};
\addplot [very thick, color10, mark=*, mark size=1, mark options={solid}, only marks]
table {%
14.195518265761 1.13563781528158
21.7340097735993 -4.0799871411908
11.8416707514425 -0.774527587990311
};
\draw[] (axis cs:-5.4166876304567,22.9890519698264) -- (axis cs:-5.4166876304567,22.9890519698264);
\node at (axis cs:-5.4166876304567,22.9890519698264)[
  scale=0.7,
  anchor=base west,
  text=color1,
  rotate=0.0
]{ C0};
\draw[] (axis cs:-12.4339708662598,-0.326933411935521) -- (axis cs:-12.4339708662598,-0.326933411935521);
\node at (axis cs:-12.4339708662598,-0.326933411935521)[
  scale=0.7,
  anchor=base west,
  text=color2,
  rotate=0.0
]{ C1};
\draw[] (axis cs:2.49440513158438,11.9701670915567) -- (axis cs:2.49440513158438,11.9701670915567);
\node at (axis cs:2.49440513158438,11.9701670915567)[
  scale=0.7,
  anchor=base west,
  text=color3,
  rotate=0.0
]{ C2};
\draw[] (axis cs:-0.792357470766928,6.57238818242706) -- (axis cs:-0.792357470766928,6.57238818242706);
\node at (axis cs:-0.792357470766928,6.57238818242706)[
  scale=0.7,
  anchor=base west,
  text=color4,
  rotate=0.0
]{ C3};
\draw[] (axis cs:11.6590300400362,-15.6991392150133) -- (axis cs:11.6590300400362,-15.6991392150133);
\node at (axis cs:11.6590300400362,-15.6991392150133)[
  scale=0.7,
  anchor=base west,
  text=color5,
  rotate=0.0
]{ C4};
\draw[] (axis cs:9.53160719080081,2.89272448673347) -- (axis cs:9.53160719080081,2.89272448673347);
\node at (axis cs:9.53160719080081,2.89272448673347)[
  scale=0.7,
  anchor=base west,
  text=color6,
  rotate=0.0
]{ C5};
\draw[] (axis cs:-13.1827158248344,9.84574107342646) -- (axis cs:-13.1827158248344,9.84574107342646);
\node at (axis cs:-13.1827158248344,9.84574107342646)[
  scale=0.7,
  anchor=base west,
  text=color7,
  rotate=0.0
]{ C6};
\draw[] (axis cs:-3.75668907313321,-4.12835465640768) -- (axis cs:-3.75668907313321,-4.12835465640768);
\node at (axis cs:-3.75668907313321,-4.12835465640768)[
  scale=0.7,
  anchor=base west,
  text=color8,
  rotate=0.0
]{ C7};
\draw[] (axis cs:-4.07580213886085,-14.5765063205307) -- (axis cs:-4.07580213886085,-14.5765063205307);
\node at (axis cs:-4.07580213886085,-14.5765063205307)[
  scale=0.7,
  anchor=base west,
  text=color9,
  rotate=0.0
]{ C8};
\draw[] (axis cs:15.9237329302676,-1.23962563796651) -- (axis cs:15.9237329302676,-1.23962563796651);
\node at (axis cs:15.9237329302676,-1.23962563796651)[
  scale=0.7,
  anchor=base west,
  text=color10,
  rotate=0.0
]{ C9};
\draw[] (axis cs:5.54912717210443,-9.17976854024501) -- (axis cs:5.54912717210443,-9.17976854024501);
\node at (axis cs:5.54912717210443,-9.17976854024501)[
  scale=0.8,
  anchor=base west,
  text=black,
  rotate=0.0
]{ D0};
\draw[] (axis cs:-5.36075647529545,-4.69183630537719) -- (axis cs:-5.36075647529545,-4.69183630537719);
\node at (axis cs:-5.36075647529545,-4.69183630537719)[
  scale=0.8,
  anchor=base west,
  text=black,
  rotate=0.0
]{ D1};
\draw[] (axis cs:-7.13690413623632,4.97980950089336) -- (axis cs:-7.13690413623632,4.97980950089336);
\node at (axis cs:-7.13690413623632,4.97980950089336)[
  scale=0.8,
  anchor=base west,
  text=black,
  rotate=0.0
]{ D2};
\node at (axis cs:18.5,22.25)[%\node at (axis cs:10.5,22.25)[
  scale=1.5,
  anchor=base west,
  text=black,
  rotate=0.0
]{ \textsf{2}};
\end{axis}

\end{tikzpicture}\hspace{-0.1cm}\input{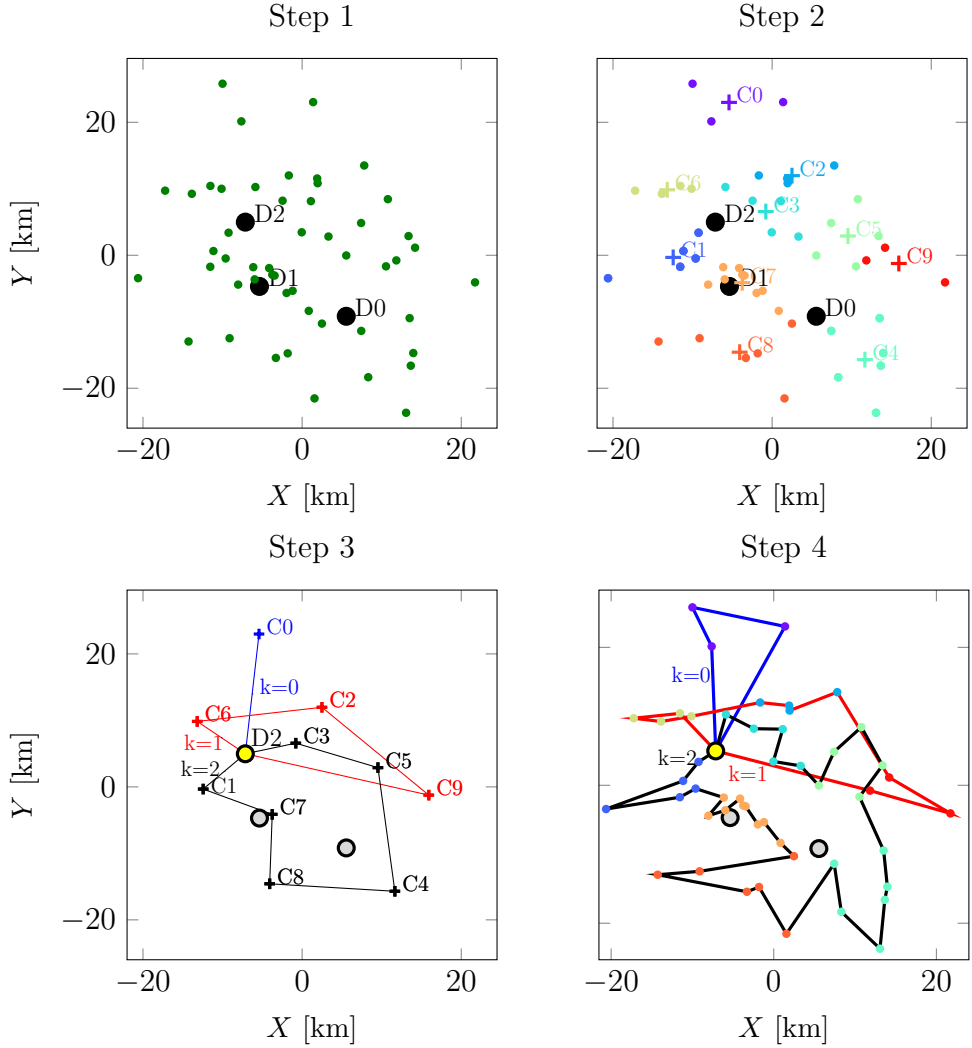}\hspace{-0.1cm}% This file was created by matplotlib2tikz v0.6.0.
\begin{tikzpicture}

\definecolor{color1}{rgb}{0.46078431372549,0.0615609061339428,0.999525719713366}
\definecolor{color0}{rgb}{1,1,0}
\definecolor{color3}{rgb}{0.0372549019607843,0.664540178707858,0.934679767321111}
\definecolor{color2}{rgb}{0.249019607843137,0.384105749171926,0.980634770468978}
\definecolor{color5}{rgb}{0.394117647058823,0.986200747353403,0.763398282741103}
\definecolor{color4}{rgb}{0.182352941176471,0.878081248083698,0.859799851448372}
\definecolor{color7}{rgb}{0.817647058823529,0.878081248083698,0.510631193180907}
\definecolor{color6}{rgb}{0.605882352941177,0.986200747353403,0.645928062486787}
\definecolor{color9}{rgb}{1,0.384105749171926,0.195845467007167}
\definecolor{color8}{rgb}{1,0.664540178707858,0.355490833300318}
\definecolor{color10}{rgb}{1,0.0615609061339432,0.0307950585561705}

\begin{axis}[
xlabel={\small{$X$ [km]}}, xlabel near ticks,
yticklabels={\empty},
xmin=-21.5, xmax=24.0,
ymin=-25.3, ymax=28.3,
axis on top,
width=\figurewidth,
height=\figureheight
]
%\addplot [very thick, black, mark=*, mark size=3, mark options={solid,draw=black}, only marks]
%table {%
%5.54912717210443 -9.17976854024501
%-5.36075647529545 -4.69183630537719
%-7.13690413623632 4.97980950089336
%};
\addplot [very thick, color0, mark=*, mark size=3, mark options={solid,draw=black}, only marks]
table {%
-7.13690413623632 4.97980950089336
};
\addplot [very thick, black!15, mark=*, mark size=3, mark options={solid,draw=black}, only marks]
table {%
5.54912717210443 -9.17976854024501
-5.36075647529545 -4.69183630537719
};
\addplot [very thick, color1, mark=*, mark size=1, mark options={solid}, only marks]
table {%
-7.63556351876465 20.137805506487
-10.0009381868691 25.7990314276531
1.38643881426368 23.030318975339
};
\addplot [very thick, color2, mark=*, mark size=1, mark options={solid}, only marks]
table {%
-11.1477491373684 0.636192570530905
-20.6200611942363 -3.4495698771771
-9.60987375516025 -0.48188111659666
-11.5462196445152 -1.73188666151523
-9.24595060001878 3.39247802508048
};
\addplot [very thick, color3, mark=*, mark size=1, mark options={solid}, only marks]
table {%
1.95147271965853 10.8328700425764
1.89203395151815 11.5478548640834
7.80550327072996 13.4992160670252
-1.6713894155691 12.0007273925418
};
\addplot [very thick, color4, mark=*, mark size=1, mark options={solid}, only marks]
table {%
-5.87182855358123 10.2605100264461
-2.45562512969019 8.20397380459398
-0.048513916390766 3.45066694356219
1.10361795370018 8.13336720243664
3.31056229212735 2.8134229350964
};
\addplot [very thick, color5, mark=*, mark size=1, mark options={solid}, only marks]
table {%
13.6567316035884 -16.610497068477
7.43204266618025 -11.3768849099122
13.5144882290728 -9.45889421572373
13.9721262004041 -14.7087307855056
8.31054036231326 -18.3461646132472
13.0682511786582 -23.6936636972141
};
\addplot [very thick, color6, mark=*, mark size=1, mark options={solid}, only marks]
table {%
7.41410809258517 4.84885103731669
10.7776825823822 8.43493264112755
5.54847141398477 -0.0346969158307768
13.3751420065552 2.88357960720043
10.5426318584968 -1.66904393614652
};
\addplot [very thick, color7, mark=*, mark size=1, mark options={solid}, only marks]
table {%
-17.2176624878364 9.71560159200989
-10.1283971789637 9.99821818315607
-11.5251844141959 10.4266580162947
-13.8596192183418 9.24248650224519
};
\addplot [very thick, color8, mark=*, mark size=1, mark options={solid}, only marks]
table {%
-8.03826228605862 -4.41702565665195
-6.14207425605862 -1.79381909566184
-3.45823196068543 -3.05501662981954
-1.95580284414657 -5.67635720788008
-1.19596435637135 -5.34569035326587
0.83496501788204 -8.3644742808282
-5.96166247088939 -3.62475290243678
-3.74741403573997 -2.94613340487459
-4.14575446613101 -1.93192237625023
};
\addplot [very thick, color9, mark=*, mark size=1, mark options={solid}, only marks]
table {%
-1.81141816833373 -14.7536174356505
-3.30015579547515 -15.4415652878383
-9.11756264643322 -12.478354569907
2.49050693879937 -10.2823001982673
1.55709347332873 -21.531526174017
-14.2732766350511 -12.9716742575041
};
\addplot [very thick, color10, mark=*, mark size=1, mark options={solid}, only marks]
table {%
14.195518265761 1.13563781528158
21.7340097735993 -4.0799871411908
11.8416707514425 -0.774527587990311
};
\addplot [very thick, blue]
table {%
-7.13690413623632 4.97980950089336
-7.63556351876465 20.137805506487
-10.0009381868691 25.7990314276531
1.38643881426368 23.030318975339
-7.13690413623632 4.97980950089336
};
\addplot [very thick, red]
table {%
-7.13690413623632 4.97980950089336
11.8416707514425 -0.774527587990311
21.7340097735993 -4.0799871411908
14.195518265761 1.13563781528158
7.80550327072996 13.4992160670252
1.95147271965853 10.8328700425764
1.89203395151815 11.5478548640834
-1.6713894155691 12.0007273925418
-10.1283971789637 9.99821818315607
-13.8596192183418 9.24248650224519
-17.2176624878364 9.71560159200989
-11.5251844141959 10.4266580162947
-7.13690413623632 4.97980950089336
};
\addplot [very thick, black]
table {%
-7.13690413623632 4.97980950089336
-9.24595060001878 3.39247802508048
-11.1477491373684 0.636192570530905
-20.6200611942363 -3.4495698771771
-11.5462196445152 -1.73188666151523
-9.60987375516025 -0.48188111659666
-6.14207425605862 -1.79381909566184
-8.03826228605862 -4.41702565665195
-5.96166247088939 -3.62475290243678
-4.14575446613101 -1.93192237625023
-3.74741403573997 -2.94613340487459
-3.45823196068543 -3.05501662981954
-1.95580284414657 -5.67635720788008
-1.19596435637135 -5.34569035326587
0.83496501788204 -8.3644742808282
2.49050693879937 -10.2823001982673
-9.11756264643322 -12.478354569907
-14.2732766350511 -12.9716742575041
-3.30015579547515 -15.4415652878383
-1.81141816833373 -14.7536174356505
1.55709347332873 -21.531526174017
7.43204266618025 -11.3768849099122
8.31054036231326 -18.3461646132472
13.0682511786582 -23.6936636972141
13.6567316035884 -16.610497068477
13.9721262004041 -14.7087307855056
13.5144882290728 -9.45889421572373
10.5426318584968 -1.66904393614652
13.3751420065552 2.88357960720043
10.7776825823822 8.43493264112755
7.41410809258517 4.84885103731669
5.54847141398477 -0.0346969158307768
3.31056229212735 2.8134229350964
-0.048513916390766 3.45066694356219
1.10361795370018 8.13336720243664
-2.45562512969019 8.20397380459398
-5.87182855358123 10.2605100264461
-7.13690413623632 4.97980950089336
};
\draw[] (axis cs:-7.38623382750048,12.5588075036902) -- (axis cs:-7.38623382750048,12.5588075036902);
\node at (axis cs:-13.5,15)[
  scale=0.7,
  anchor=base west,
  text=blue,
  rotate=0.0
]{ k=0};
\draw[] (axis cs:2.3523833076031,2.10264095645153) -- (axis cs:2.3523833076031,2.10264095645153);
\node at (axis cs:-6.5,0.6)[
  scale=0.7,
  anchor=base west,
  text=red,
  rotate=0.0
]{ k=1};
\draw[] (axis cs:-8.19142736812755,4.18614376298692) -- (axis cs:-8.19142736812755,4.18614376298692);
\node at (axis cs:-15.2,2.9)[
  scale=0.7,
  anchor=base west,
  text=black,
  rotate=0.0
]{ k=2};
\node at (axis cs:18.5,22.25)[%\node at (axis cs:10.5,22.25)[
  scale=1.5,
  anchor=base west,
  text=black,
  rotate=0.0
]{ \textsf{4}};
\end{axis}

\end{tikzpicture}
\caption{Illustration of Steps 1 to 4 of \textsf{CApR}-$n$. (Plot 1) Three depots (D0, D1 and D2) and 50 fields are visualised by the black and green balls, respectively. (Plot 2) The fields are assigned to $\tilde{k}=10$ clusters (C0,\dots,C9). All fields belonging to the same cluster are colored correspondingly. The cross-signs indicate the $\tilde{k}$ centroids and are labeled accordingly. (Plot 3)  Results of  \textsf{IP}-$7$ applied to the $\tilde{k}$ centroids. (Plot 4) Results of \textsf{CApR}-$7$. For visualisation, the fields are colored according to the clustering result.  Labels $k=0$, $k=1$ and $k=2$ indicate the first edge traversal of each crop-tour, whereby a \emph{crop-tour} denotes the harvesting routes associated with a specific crop $k$.}
\label{fig:alg_CApR}
\end{figure*}

%%%%%%%%%%%%%%%%%%%%%%%%%
\subsection{Analysis and Discussion of the Main Algorithm\label{subsec_AnalysisMainAlg}}

Several comments are made. First, clustering Step 2 is introduced to upscale the number of fields that can be handled for the coupling of crop assignment and routing. This is relevant if the available combination of computational power and IP-solver is prohibiting to solve a large-scale \textsf{IP}-$n$ with many fields. However, it is here explicitly emphasised that clustering induces suboptimality in the solution. This aspect is exemplatorily quantified in Experiment 2 of Sect. \ref{subsubsec_expt2}, and further discussed in the outlook of the conclusion in Sect. \ref{sec_conclusion}. Note that the clustering Step 2 does not necessarily have to be conducted according to spatial proximity of fields. Fields can be clustered arbitrarily. Also, single fields can be assigned to a single cluster for special analysis. Clustering Step 2 entails Step 4, which solves multiple instances of a \emph{TSP with different start and end vertex} and thus generates routes within clusters of fields planting the same crop. Therefore, omitting superscript $k$ for brevity, the general IP is
\begin{subequations}
\label{eq:TSPdifferentStartEnd}
\begin{align}
\min &\ \ \sum_{i<j}c_{ij} x_{ij} \label{eq_TSPdifferentStartEnd_objFcn}\\
\mathrm{s.t.} &\ \  \sum_{j=1}^{N-2} x_{0j}=1,~\sum_{j=1}^{N-2} x_{jN-1}=1,\label{eq_TSPdifferentStartEnd_1stcstrt}\\
& \ \  \sum_{i<l} x_{il} + \sum_{l<j} x_{lj}=2,~l=1,\dots,N-2,\label{eq_TSPdifferentStartEnd_2ndcstrt}\\
& \ \ \sum_{{i,j\in S}\atop{i<j}} x_{ij} \leq |S|-1,~\forall S\subseteq \mathcal{V}\backslash\{0,N-1\},~3\leq |S|\leq N-3,\label{eq_TSPdifferentStartEnd_SECs}\\
& \ \ x_{ij}\in\{0,1\},~0\leq i < j,~j=1,\dots,N-1, 
\end{align}
\end{subequations} 
where vertex-indices $0$ and $N-1$ are defined as the start and end vertex of the \emph{traveling salesman tour} (TSP) connecting the $N$ vertices of a given cluster, respectively. See Step 4 of \textsf{CApR}-$n$ for its application and the specific method for the selection of start and end vertices. See also Fig. \ref{fig:alg_CApR} for further visualisation. Constraints~\eqref{eq_TSPdifferentStartEnd_1stcstrt} and \eqref{eq_TSPdifferentStartEnd_2ndcstrt} indicate that start and end vertex are incident to one edge, and all other vertices incident to two, respectively. Under the assumption of symmetry, the \emph{subtour elimination constraints} (SECs) according to \cite{laporte1992traveling} are given by \eqref{eq_TSPdifferentStartEnd_SECs}.

Second and importantly, note that if the number of clusters is equal to the number of fields, i.e., $\tilde{k}=L$, the algorithm is reduced to the IP-formulations of Step 3. Then, \textsf{CApR}-$n$ is equivalent to \textsf{IP}-$n$, and $J^{\textsf{CApR}-n}=J^{\textsf{IP}-n}$. 
\begin{remark}
It is stressed that for $\tilde{k}<L$, because of the clustering Step 2, there can be examples constructed in which a hierarchical method that (a) first solves an assignment problem \emph{without} accounting for \emph{any} spatial field-proximity, before (b) then computing crop-tours can be more cost-efficient than a corresponding \textsf{CApR}-solution. However, this cannot occur for $\tilde{k}=L$. This is since the hierarchical solution is a feasible solution of the method coupling crop assignment and vehicle routing. Therefore, in practice, to optimise the monetary result the number of clusters must always be increased as much as computational power and available IP-solver permit, ideally, until $\tilde{k}=L$.
\label{remark_tildekSmallerL}
\end{remark} 
For completeness, a basic IP for pure crop assignment to fields  \emph{without accounting for any field-connectivity information} is $\text{min}\{-\sum_{l\in\mathcal{L}}\sum_{k\in\mathcal{K}}r_l^k\delta_l^k : \sum_{k\in\mathcal{K}} \delta_l^k=1,~\forall l\in\mathcal{L},~\delta_l^k\in\{0,1\},~\forall l\in\mathcal{L},~\forall k\in\mathcal{K}\}$. Under the additional assumption of field-uniform $r_l^k=r^k,~\forall l\in\mathcal{L}$, its optimal solution always assigns the most profitable crop indexed by $k^\star=\text{arg} \max_{k\in\mathcal{K}} r^k$ to \emph{all} fields. Furthermore, when including both crop rotation and diversification constraints, and introducing the relaxation $\sum_{k\in\mathcal{K}} \delta_l^k\leq 1$,  feasibility of the assignment IP is always guaranteed. This is since these constraints can always be satisfied by $\delta_l^k=0$. Moreover, note that  the solution of the LP-relaxation of the assignment IP, and also including \emph{crop rotation constraints}, is \emph{integer feasible}, and thus solves these problems as well. The proof is omitted for brevity, but it follows from the \emph{integrality} property of the LP-relaxation of the assignment IP, see  \cite{schrijver1998theory} and \cite{heller1956extension}. Consequently, very large instances (with many fields and crops) of the assignment OP can easily be solved. As a remark, the aforementioned inequality relaxation does not affect the integrality property. This is since slack variables $s_l$ can be introduced such that $\sum_{k\in\mathcal{K}} \delta_l^k + s_l = 1,~s_l\in \{0,1 \},~\forall l\in\mathcal{L}$. In contrast, by adding diversification constraints, in general, the LP-relaxation is rendered to not be integer feasible anymore.

% -----------

Third, the relations between different \textsf{IP}-$n$ are discussed. 
\begin{proposition}\label{proposition_IP2vsIP3}
It always holds $J^{\textsf{IP}-3}\geq J^{\textsf{IP}-2}$.
\end{proposition}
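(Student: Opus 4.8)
The plan is to prove the inequality by a domination argument: from an optimal solution of \textsf{IP}-2 I would construct a feasible solution of \textsf{IP}-3 whose objective value \eqref{eq_ILP3_objFcn} is no larger, so that the minimum attained by \textsf{IP}-3 can only be smaller, i.e. $-J^{\textsf{IP}-3}\leq -J^{\textsf{IP}-2}$. The decisive observation is that the two formulations share their revenue and crop-cost structure: the terms $-\sum_{l}\sum_{k} r_l^k\delta_l^k$ and $m\gamma$ depend only on the crop-to-field assignment $\{\delta_l^k\}$ and on the number of active crops. Hence the first step is to carry the assignment of the \textsf{IP}-2 optimum over verbatim into the candidate \textsf{IP}-3 solution; this keeps \eqref{eq_ILP3_2ndcstrt} satisfied, leaves the revenue and the value of $\gamma$ unchanged, and reduces the whole comparison to the routing-plus-depot part of the cost.

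Next I would treat the depot and routing terms, where the two problems genuinely differ: in \textsf{IP}-2 the harvesters are tied to their initial depots and must return there after every crop-tour, whereas \textsf{IP}-3 is free to assign all harvesters to a single best depot through $\xi^d$ subject to $\sum_{d}\xi^d=1$. I would select $d^\star$ as the depot minimising the combined connection-and-maintenance cost and set $\xi^{d^\star}=1$, $p^{d^\star}=\gamma$ and $p^{d}=0$ otherwise, so that \eqref{eq_ILP3_sumxid_1} and the linearised identities \eqref{eq_ILP3_pd_1}--\eqref{eq_ILP3_pd_3} encoding $p^d=\gamma\xi^d$ hold automatically. For each active crop I would then form a single tour out of $d^\star$ over exactly the field set that \textsf{IP}-2 assigns to that crop, so that the field degree equations \eqref{eq_ILP3_1stcstrt} and the subtour-elimination constraints are met by construction. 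Since the cost coefficients \eqref{eq_def_c} are invariant to the \emph{number} of harvesters travelling along a path (uniform scaling by $N^{\text{harv},k}$, exactly as exploited in Proposition \ref{rmk_HarvestersAsGroupOneDepot}) and arc costs satisfy the triangle inequality, the length of this consolidated tour together with its two edges incident to $d^\star$ can be bounded by the corresponding quantity incurred in \textsf{IP}-2, the best-depot selection absorbing the difference between a single and a distributed basing of the harvesters.

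The main obstacle is precisely this last cost comparison, which is clean only once the harvester-count scaling is tracked correctly: \textsf{IP}-3 routes the full group $N^{\text{harv},k}$ from $d^\star$, while \textsf{IP}-2 may distribute the work across several depots, so the bound must combine the triangle inequality (to reconnect the \textsf{IP}-2 routes into one $d^\star$-based tour) with the optimality of the choice of $d^\star$ over all depots. I would establish this crop by crop, sum the per-crop inequalities, and then re-add the assignment-dependent terms already shown to be identical, obtaining the objective inequality and hence $J^{\textsf{IP}-3}\geq J^{\textsf{IP}-2}$. A useful consistency check is the boundary case $D=1$, where \textsf{IP}-2 and \textsf{IP}-3 coincide and the inequality holds with equality.
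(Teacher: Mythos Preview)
Your overall strategy—carrying the optimal \textsf{IP}-2 assignment and crop-tours over to a feasible \textsf{IP}-3 solution and then comparing objective values—is correct and is essentially what the paper does (there framed as a contradiction argument). Where you diverge is in the routing comparison, which you over-complicate. The structural fact you are missing is that \textsf{IP}-2 and \textsf{IP}-3 have \emph{identical} routing topology: each crop-tour in \textsf{IP}-2 is already a single tour through one (virtual) depot vertex, not several depot-based subtours that would have to be spliced together. The two programs differ \emph{only} in the depot-to-field cost coefficients: \textsf{IP}-2 uses $c_{dj}^{k,k_\text{min}}=\sum_{\tilde d\in\mathcal{D}} N_{\tilde d}^{\text{harv},k}\tilde c_{\tilde d j}^k$ (cf.\ \eqref{eq_def_cdjkmin}), while \textsf{IP}-3 with chosen depot $d^\star$ uses $c_{d^\star j}^k = N^{\text{harv},k}\tilde c_{d^\star j}^k$ from \eqref{eq_def_cdjk}. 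Consequently no triangle inequality and no ``reconnecting the \textsf{IP}-2 routes into one $d^\star$-based tour'' is needed; the same tour can be re-used verbatim with only the two depot-incident edges re-priced.

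The paper's argument exploits this directly: for the fixed \textsf{IP}-2-optimal routing, the total depot-connection cost is a nonnegative linear combination over $\tilde d\in\mathcal{D}$ of the per-depot quantities $N_{\tilde d}^{\text{harv},k}\tilde c_{\tilde d j}^k$. By nonnegativity of $\tilde c_{\tilde d j}^k$, this combination is never smaller than what one obtains by concentrating all $N^{\text{harv},k}=\sum_{\tilde d}N_{\tilde d}^{\text{harv},k}$ harvesters at the single most cost-efficient depot—which is exactly a feasible \textsf{IP}-3 candidate. That is the whole proof; your invocation of Proposition~\ref{rmk_HarvestersAsGroupOneDepot} and of the triangle inequality is superfluous. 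Your boundary check $D=1$ (equality) is correct and coincides with the paper's equality remark.
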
 
\begin{proof}
The proof is by contradiction. Assume $J^{\textsf{IP}-2}>J^{\textsf{IP}-3}$. $J^{\textsf{IP}-2}$ and $J^{\textsf{IP}-3}$ differ by cost coefficients $c_{dj}^k=c_{dj}^{k,k_\text{min}},~\forall k\in\mathcal{K}$ and $c_{dj}^k$ in \eqref{eq_def_cdjk}, respectively. By linearity of $J^{\textsf{IP}-2}$ and the definition of $c_{dj}^{k,k_\text{min}}$ according to \eqref{eq_def_cdjkmin}, and by nonnegativity of $\tilde{c}_{\tilde{d}j}^k$, $J^{\textsf{IP}-2}$ can always be increased by concentrating all harvesters, $ \sum_{\tilde{d}\in\mathcal{D}} N_{\tilde{d}}^{\text{harv},k}$, to the most cost-efficient depot. This is the \textsf{IP}-3 solution and therefore contradicts our assumption. The equality-part is because a special case of \textsf{IP}-2 is that none harvesters are initially located at any of the depots except the optimal one according to \textsf{IP}-3. 
\end{proof}
It also always holds that $J^{\textsf{IP}-3}\geq J^{\textsf{IP}-1}$. This is since the latter single depot case is a feasible solution of the former multiple depot case. Generalizing statements regarding $J^{\textsf{IP}-1}$ versus $J^{\textsf{IP}-2}$, and likewise for $J^{\textsf{IP}-3}$ vs. $J^{\textsf{IP}-4}$ cannot be made. This is because it is always possible to create counterexamples in favor of one or another solution. However, it always holds that $J^{\textsf{IP}-n}\geq J^{\textsf{IP}-(n+4)},~\forall n=1,\dots,4$. This is since the method of enforcing \emph{all} crops as for $\textsf{IP}-(n+4)$ is always a feasible solution of $\textsf{IP}-n$.

% ---------

Fourth, the relations between various $J^{\textsf{IP}-n},~\forall n=1,\dots,8$, as discussed before, can in general \emph{not} be translated to the corresponding \textsf{CApR}-$n$ solutions for $\tilde{k}<L$. This is because of the clustering-heuristic in \textsf{CApR}-$n$. For instance, $J^{\textsf{CApR}-3}\geq J^{\textsf{CApR}-2}$ can in general not be guaranteed. 

Fifth, it is elaborated on Step 4 of \textsf{CApR}-$n$. Under the absence of priority constraints, there exist two directions in which to traverse any crop-tour. The traversal direction affects the closest fields between any pair of consecutive clusters. Consequently, the TSP-solution for each cluster, and thereby ultimately the total path length of the crop-tour, is affected, too. This motivated to test both cluster-sequences as indicated in Step 4. As stated, \textsf{CApR}-$n$ does not account for priority constraints, i.e., for a priori modeling of field ripeness sequences. Therefore, Step 2 and 4 require modification and clustering must be conducted according to an objective accounting for ripeness level. As a consequence, the traversal direction for Step 4 would also be fixed. 

Sixth, the result of \textsf{CApR}-$n$ could be further refined by including an additional  step (before Step 5) for heuristic local searches that iteratively tests and exchanges field-pairs within a crop-tour sequence if it improves the $J^{\textsf{CApR}-n}$-result. Alternatively, local field sequences could here also be exchanged manually according to heuristic preferences of the farm operator.

%%%%%%%%%%%%%%%%%%%%%%%%%%%%%%%%%%%%%%%%%%%%%
\section{Extensions\label{sec_Extensions}}

% ---------------------------------------------------
\subsection{Financial Considerations Regarding Leasing\label{subsec_leasing}}

For leasing considerations, the \emph{partial} service of a subset of fields is of interest. Let subset $\tilde{\mathcal{L}} \subseteq \mathcal{L}$ denote all fields for which we do not necessarily want to enforce field service but contemplate leasing options. Then, for \textsf{IP}-3, we maintain equality constraints \eqref{eq_ILP3_1stcstrt} and \eqref{eq_ILP3_2ndcstrt} only for $\mathcal{L}\backslash \tilde{\mathcal{L}}$, and define relaxed inequalities $\sum_{d\in\mathcal{D}} x_{dl}^k + \sum_{i<l} x_{il}^k + \sum_{l<j} x_{lj}^k \leq  2 \delta_l^k,\forall l\in\tilde{\mathcal{L}},\forall k\in\mathcal{K}$ and $\sum_{k\in\mathcal{K}} \delta_l^k \leq  1,~\forall l\in\tilde{\mathcal{L}}$. We similarly relax corresponding constraints for all other \textsf{IP}-$n$. Any \textsf{CApR}-$n$ including such constraints, shall be denoted as the relaxed \textsf{CApR}-$n$. In constrast, the original problem according to Sect. \ref{subsec_MainAlg} is referred to as the standard \textsf{CApR}-$n$.

An important financial consideration for every farm is to decide upon \emph{either} servicing \emph{or} renting out of one's fields, and additionally the decision upon taking of leases on additional fields for coverage. Let us denote the sets of corresponding fields by $\mathcal{L}^\text{own}$ (farmer's own fields), $\mathcal{L}^\text{pro}\subseteq \mathcal{L}^\text{own}$ (potential rent outs) and $\mathcal{L}^{ptl}$ (potential fields for taking leases upon), respectively. Then, the following algorithm provides guidelines for decision making:

\vspace{0.2cm}
\hspace{-0.4cm}
%\begin{longtable}{ l l }
\begin{tabular}{ l l }
\hline\\[-8pt]
 \multicolumn{2}{l}{\textbf{Algorithm 2}: Renting out and Taking Leases}\\[3pt]
\hline\\[-8pt]
1: & Define all fields considered by $\mathcal{L}= \mathcal{L}^\text{own} \cup \mathcal{L}^\text{ptl}$.\\[3pt]
2: & Define the set of fields of interest by $\tilde{\mathcal{L}}= \mathcal{L}^\text{pro} \cup \mathcal{L}^\text{ptl}$. \\[3pt]
3: & Modeling according to farmer's production means.\\ 
& - determine parameters of Step 1 of \textsf{CApR}-$n$,$\forall l\in\mathcal{L}$.\\[3pt]
4: & \textbf{Solve} a relaxed \textsf{CApR}-$n$ for any desired $n=1,\dots,8$.\\[3pt]
5: & Determine $\mathcal{L}^\text{ntl} = \{l\in\mathcal{L}^\text{ptl}: \delta_l^k=0,\forall k\in\mathcal{K} \}$.\\
& - \emph{not} take a lease on any of these fields.\\[3pt]
6: & Determine $\mathcal{L}^\text{ro} = \{l\in\mathcal{L}^\text{pro}: \delta_l^k=0,\forall k\in\mathcal{K} \}$.\\
& - rent out these fields (\emph{any} positive return is good).\\[3pt]
7: & \textbf{Solve} standard \textsf{CApR}-$n$ for $\mathcal{L}^1 = \mathcal{L}\backslash \{ \mathcal{L}^\text{ntl} \cup \mathcal{L}^\text{ro} \}$.\\
& - denote its monetary return by $J_{\mathcal{L}^1}$.\\[3pt]
8: & \textbf{Solve} standard \textsf{CApR}-$n$ for $\mathcal{L}^\text{own}$.\\
& - denote its monetary return by $J^{\mathcal{L}^\text{own}}$.\\[3pt]
9: & Take leases of fields $\mathcal{L}^\text{ptl}\backslash \mathcal{L}^\text{ntl}$ for the overall payment\\
&   rate of at most $\Delta J = J^{\mathcal{L}^1}-J^{\mathcal{L}^\text{own}}$.\\[3pt]\hline\\[0pt]
%\label{alg_Leasing}
%\end{longtable}
\end{tabular}
It is elaborated on Step 4. Suppose the inclusion of a field does not improve the total financial return. This may be because of too expensive production costs, for example, when fields are located very distant apart from depots or due to limited harvesting windows. Then, renting out is profitable essentially already for any positive return. Let us also discuss Step 9. In contrast to pure assignment problems, the maximum leasing rate $\Delta J$ cannot easily be distributed among corresponding fields. This is because monetary profits are nonlinearly related to crop returns because of the coupling with routing decisions. Importantly, the precise distribution of leasing rates of individual fields is not relevant as long as it overall does not surpass $\Delta J$. Thus, $\Delta J$ provides the farmer with an upper bound on profitable leasing rates. If $\Delta J$ cannot be attained in negotiations, different $\mathcal{L}^\text{ptl}$ should be selected and the algorithm solved again. This is repeated until a corresponding upper bound can be satisfied, or, ultimately, $\mathcal{L}^\text{own}\backslash \mathcal{L}^\text{ro}$ are serviced.

The second financial consideration is motivated by the comparison of monetary returns for \textsf{CApR}-$n$. It permits to determine ``fair'' prices for leasing when sheltering machinery at the various depots. It is envisioned that all collaborating farmers first involve in accurate system modeling of cost coefficients, before then solving either all of  \textsf{CApR}-$n$, $\forall n=1,\dots,4$, or all of \textsf{CApR}-$n$, $\forall n=5,\dots,8$. Specifically, the difference in objective values between   
\textsf{CApR}-$2$ (or \textsf{CApR}-$7$ for enforcement of all $K$ crops in the solution) and the remaining \textsf{CApR}-$n$ then permits to determine an upper bound on leasing rates for depot usage.

Finally, eventhough a detailed discussion is here out of scope, the importance of a suitable method for sharing of profits is underlined. It is fundamental for providing incentives for farmers to collaborate and adopt proposed planning methods. See \cite{andersson2005farm} for a discussion about how farm cooperation can improve both sustainability and profitability.

% ---------------------------------------------------
\subsection{Application in Practice} 

Modeling of parameters listed in Step 1 of \textsf{CApR}-$n$ is largely affecting the overall result and may be based on historical field and crop yield data. By the selection of $c_{ij}^k$, computational complexity can be reduced by pruning specific undesired field connections from a path network, thereby implicitly also influencing priority constraints. Large fields often have multiple possible field entrance and exit points, which may significantly affect inter-field travel distances. Therefore, field coverage patterns and in-field navigation (see \cite{jensen2012field}, \cite{conesa2016mix}, \cite{plessen2018partial}, \cite{plessen2019optimal}) could also be co-planned to account for crop-tours efficiently linking fields that plant the same crops. This may be subject of future work.

%%%%%%%%%%%%%%%%%%%%%%
\section{Numerical Simulations\label{sec_numerical_experiments}}

This section summarises the experimental data setup, the methods used to solve proposed IPs, and finally some numerical simulation results.

% ------------------------------------------
\subsection{Experiment Data Setup\label{subsect_ExptSetup}}

\begin{table}
%\begin{center}
\centering
\caption{Normalised and averaged monetary returns in Northern Germany that were used in simulations.} 
\label{tab_rk}
\begin{small}
\bgroup
\def\arraystretch{1}%  1 is the default
\begin{tabular}{|c|c|c|c|c|}
\hline
\rowcolor[gray]{0.8} & Barley & Rapeseed & Wheat  & Unit \\[2pt]\hline 
\rowcolor[gray]{0.93} k & 0 & 1 & 2 & - \\[2pt]\hline 
$\tilde{r}^k$ & 570 & 600 & 750 & $\text{\euro}/\text{ha}$ \\\hline
\end{tabular}
\egroup
\end{small}
%\end{center}
\end{table}

For numerical simulations one data set is composed of 10 different random instances. In Sect. \ref{subsec_NumExpts} the mean evaluation results are reported for these instances. Problem data is generated synthetically based on realistic parameter settings from farming in Northern Germany. The data generation for each instance is as follows. The number of available depots, crops and fields is set as $D=3$, $K=3$ and $L=50$, respectively. Field and depot locations are generated randomly according to a Gaussian distribution centered at the origin with standard deviations $\sigma_d=10$km, $\forall d\in\mathcal{D}$, and $\sigma_l=15$km, $\forall l\in\mathcal{L}$. To each depot, we randomly assign a number of harvesters according $N_d^{\text{harv},k}=\max(1,\left \lfloor{5u_d }\right \rfloor ),~ u_d\sim\mathcal{U}(0,1),~\forall d\in\mathcal{D}$, where $\mathcal{U}(0,1)$ denotes the Uniform distribution with zero mean and unit variance, and $\left \lfloor{\cdot}\right \rfloor$ denotes rounding to the next smallest integer. Normalised traveling costs per harvester and km are set uniformly (for all inter-depot and inter-field distances) as $\tilde{c} = 30 \frac{\text{\euro}}{\text{km}}$. A cost of $m=1000$\euro~for every planted crop is assumed. Maintenance costs are assumed to be identical for all depots. Therefore, w.l.o.g. $z^d=0,~\forall d\in\mathcal{D}$ is set. Realistic  normalised monetary returns in \euro~per ha and crop are determined as mean values from intermediate soil qualities and crop yields in Northern Germany. These are summarised in Table \ref{tab_rk}. Two options were considered for monetary return per field and crop. First, field sizes were generated according to $s_l = \max(20 + 10u_d,1),~u_d\sim\mathcal{N}(0,1),~\forall l\in\mathcal{L}$, where $\mathcal{N}(0,1)$ denotes the Gaussian distribution with zero mean and unit variance. In combination with $L=50$ this results approximately in a total coverage size of 1000ha. According to the survey by \cite{Landwirtschaft}, in all of Germany there are 299134 farming businesses of which only 1502 have a size of more than 1000ha. Field sizes were then multiplied with $\tilde{r}^k$ according to Table \ref{tab_rk}, to yield $r_l^k = s_l \tilde{r}^k,~\forall l\in\mathcal{L},~k\in\mathcal{K}$. This method of data generation is intuitive. However, since normalised monetary return is considerably higher for wheat in contrast to barley and rapeseed, the application of \textsf{CApR}-$n$ typically assigns wheat to all fields, unless crop rotation constraints, or diversification constraints, as in \textsf{CApR}-$n$ for $n=5,\dots,8$, are included. Thus, in the latter cases, the crop with smallest monetary return is assigned to the cluster with smallest field area, and the crop with second-smallest return to the second-smallest area and so forth. In a second setting, and to add more variety, monetary returns per field and crop were therefore generated differently according to $r_l^k = \max(20 + 10u_l^k,1) \tilde{r}^k,~\forall l\in\mathcal{L},~k\in\mathcal{K}$, with $u_l^k\sim\mathcal{N}(0,1)$. Throughout Sect. \ref{subsec_NumExpts}, the second setting is used for data generation. 

% ------------------------------------------
\subsection{Solution of Integer Programs\label{subsec_Soln_IPs}}

For the solution of IPs in \textsf{CApR}-$n$, three open-source IP-solver candidates were considered: CBC (\cite{forrestcbc}), GLPK\_MI (\cite{makhoringlpk}), and ECOS\_BB (\cite{Domahidi2013ecos}). All of these solvers were called through the domain-specific language CVXPY for optimisation embedded in Python (\cite{diamond2016cvxpy}). All numerical experiments were conducted on a laptop running Ubuntu 16.04 equipped with an Intel Core i7 CPU @2.80GHz$\times$8 and 15.6GB of memory. For the present applications and in preliminary tests (e.g., stochastic experiments for $L=50$ and various $\tilde{k}$), the GLPK\_MI consistently outperformed the other two considered IP-solvers. Therefore, it was used for all numerical experiments of Sect. \ref{subsec_NumExpts}. For clarification, the focus of this paper was on problem \emph{modeling} for a broad variety of settings typical in an agricultural context. The scope of this pape is not discussing or applying the most efficient (commercial) problem \emph{solver}. 
%
% Expts ---
% IP5, L=50, Kmeans=10 ---> 0.12s cbc, 0.01s glpk_mi, 0.06s ecos_bb.
% IP5, L=50, Kmeans=20 ---> 8.2s cbc, 0.24s glpk_mi, 16.06s ecos_bb.
% ---------  

A remark to incorporating SECs in Integer Program is made.  The proposed optimisation problems include an exponential number of SECs (see also \cite{laporte1992traveling} for a general discussion). Therefore, in this paper SECs are approached in form of \emph{separation algorithms} according to \cite{pataki2003teaching}. Thus, SECs are added sequentially as they are needed. For instance, with regard of \eqref{eq:TSPdifferentStartEnd}, it is first solved without \eqref{eq_TSPdifferentStartEnd_SECs}. Then, if the result does not return any subtour, the optimal solution has been found. Otherwise, all detected subtours are added to \eqref{eq:TSPdifferentStartEnd} as SECs, and the IP is solved again. This is repeated until a solution without subtours is found (the optimal solution), or a maximal number of SEC-iterations is reached. 

\begin{table*}
\centering
%\begin{center}
%\vspace{0.2cm}
\caption{Summary of Experiment 1. The results of \textsf{CApR}-$n$ for $n=1,\dots,8$ are compared for two different $\tilde{k}$. } 
\label{tab_expt_1}
\begin{small}
%\bgroup
\def\arraystretch{1.15}%  1 is the default
\begin{tabular}{|l|c|c|c|c|c|c|c|c|c|}
\hline
\rowcolor[gray]{0.8} \multicolumn{10}{|c|}{$\tilde{k}=10$}\\
\hline
\rowcolor[gray]{0.93} & Unit & $n=1$ & $n=2$ & $n=3$ & $n=4$ & $n=5$ & $n=6$ & $n=7$ & $n=8$  \\ \hline %\hline
$\bar{J}^{\textsf{CApR}-n}$ & \euro  & 722075 & 718132 & 723664 & 715732 & 715442 & 711435 & 718474 & 715732 \\\hline
$\bar{T}_\text{CPU}^{\textsf{IP}-n}$ & s & 0.07 & 0.59 & 0.09 & 0.07 & 0.03 & 0.02 & 0.02 & 0.02 \\
$\bar{N}_\text{iterSEC+}^{\textsf{IP}-n}$ & - & 10.10 & 31.30 & 9.40 & 2.00 & 4.30 & 3.50 & 2.80 & 2.00 \\
$N_z^{\textsf{IP}-n}$ & - & 196 & 196 & 262 & 541 & 195 & 195 & 258 & 348\\
$N_\text{eq}^{\textsf{IP}-n}$ & - & 41 & 41 & 44 & 68 & 43 & 43 & 50 & 68\\
$N_\text{ineq,NoSEC}^{\textsf{IP}-n}$ & - & 197 & 197 & 275 & 1112 & 195 & 195 & 258 & 348\\
$\bar{N}_\text{ineq,finalIP}^{\textsf{IP}-n}$ & - & 208.70 & 232.6 & 285.6 & 1113.0 & 199.0 & 197.9 & 260.1 & 349.0\\\hline
$\bar{T}_\text{CPU}^\text{TSPs}$ & s & 0.03 & 0.03 & 0.03 & 0.03 & 0.91 & 0.82 & 0.03 & 0.03\\
$\bar{N}_\text{iterSEC+}^\text{TSPs}$ & - & 12.6 & 13.7 & 13.5 & 12.9 & 16.8 & 14.2 & 13.2 & 12.9\\\hline
$P_\text{conv}^{\textsf{IP}-n}$ & \% & 100 & 100 & 100 & 100 & 100 & 100 & 100 & 100 \\
\hline
\rowcolor[gray]{1} \multicolumn{10}{c}{}\\[-5pt]
\hline
\rowcolor[gray]{0.8} \multicolumn{10}{|c|}{$\tilde{k}=20$}\\
\hline
\rowcolor[gray]{0.93} & & $n=1$ & $n=2$ & $n=3$ & $n=4$ & $n=5$ & $n=6$ & $n=7$ & $n=8$  \\ \hline 
$\bar{J}^{\textsf{CApR}-n}$ & \euro & - & - & 748851 & 747230 & 746431 & 742329 & 748379 & 747230 \\\hline
$\bar{T}_\text{CPU}^{\textsf{IP}-n}$  &  s & 63.09 & 220.72 & 49.24 & 46.06 & 0.22 & 0.14 & 44.83 & 8.14 \\
$\bar{N}_\text{iterSEC+}^{\textsf{IP}-n}$  & - & 73.8 & 124.0 & 82.6 & 24.2 & 9.0 & 5.8 & 24.3 & 24.2\\
$N_z^{\textsf{IP}-n}$ & - & 691 & 691 & 817 & 1366 & 690 & 690 & 813 & 993\\
$N_\text{eq}^{\textsf{IP}-n}$ & - & 81 & 81 & 84 & 108 & 83 & 83 & 90 & 108\\
$N_\text{ineq,NoSEC}^{\textsf{IP}-n}$ & - & 692 & 692 & 830 & 2477 & 690 & 690 & 813 & 993\\
$\bar{N}_\text{ineq,finalIP}^{\textsf{IP}-n}$ & - & 815.8 & 918.0 & 958.1 & 2504.3 & 702.1 & 698.5 & 855.5 & 1020.3\\\hline
$\bar{T}_\text{CPU}^\text{TSPs}$ & s  & 0.02 & 0.02 & 0.02 & 0.02 & 0.02 & 0.02 & 0.02 & 0.02\\
$\bar{N}_\text{iterSEC+}^\text{TSPs}$ & - & 14.4 & 14.3 & 14.6 & 14.5 & 14.6 & 14.6 & 14.6 & 14.5\\\hline
$P_\text{conv}^{\textsf{IP}-n}$ & \%  & 90 & 30 & 100 & 100 & 100 & 100 & 100 & 100 \\ 
\hline
\end{tabular}
%\egroup
%\end{center}
\end{small}
\end{table*}

The sequential inclusion of SECs as they are needed offers the advantage of simplicity. Furthermore and importantly, as will be shown empirically in Sect. \ref{subsec_NumExpts}, for the preferred IP-settings and $\tilde{k}=10$ \emph{very few} (low single digit) SEC-iterations were required to find an integer solution without subtours, which thus also implies efficiency. Nevertheless, it is here clarified that it may not be the most efficient method to handle SECs. This is further discussed in the outlook of Sect. \ref{sec_conclusion}, including suggestion of future work.

% ------------------------------------------
\subsection{Experimental Results\label{subsec_NumExpts}}
%Loop: Kmeans, Ssim, CApR1:8.

This section is partitioned into two parts. First, all proposed methods \textsf{CApR}-$n$ for $n=1,\dots,8$ are compared in experiments according to Sect. \ref{subsect_ExptSetup} for two different numbers of field-clusters. Second, the improvement potential of solving CApR-problems \emph{without} any clustering is illustrated, i.e., for $\tilde{k}=L$.

Evaluation criteria are averaged over 10 simulation experiments, and indicated by $\bar{y}$ for a criterion $y$. The percentage out of the 10 simulation experiments for which an \textsf{IP}-$n$ solution could be found in less than 200 SEC-iterations is denoted by $P_\text{conv}^{\textsf{IP}-n}$. The average number of decision variables, SEC-iterations plus the initial IP-iteration without SECs, number of equality constraints, inequality constraints when first omitting SECs and for the final SEC-iteration (before convergence) are denoted by $N_z^{\textsf{IP}-n}$, $\bar{N}_\text{iterSEC+}^{\textsf{IP}-n}$, $N_\text{eq}^{,\textsf{IP}-n}$, $N_\text{ineq,NoSEC}^{\textsf{IP}-n}$ and $\bar{N}_\text{ineq,finalIP}^{\textsf{IP}-n}$, respectively. Average accumulated CPU-time for the solution of all SEC-iterations and all TSP-problems are $\bar{T}_\text{CPU}^{\textsf{IP}-n}$ and $\bar{T}_\text{CPU}^{\text{TSPs}}$, respectively. The average number of required SEC-iterations for the solution of all TSPs (which includes all SEC-iterations plus the initial IP-solution without SECs) is denoted by $\bar{N}_\text{iterSEC+}^\text{TSPs}$.

% ----------------------------------------------------
\subsubsection{Experiment 1}

The results of Experiment 1 are summarised in Table \ref{tab_expt_1}. For $\tilde{k}=20$, $\bar{J}^{\textsf{CApR}-1}$ and $\bar{J}^{\textsf{CApR}-2}$ are not reported in the comparison since only 90\% and 30\% of experiments could be solved within 200 SEC-iterations. Several remarks can be made. First, it is noted how quickly computational complexity rises with increasing $\tilde{k}$. For example, $\bar{T}_\text{CPU}^{\textsf{IP}-1}=0.07$s for $\tilde{k}=10$, but $\bar{T}_\text{CPU}^{\textsf{IP}-1}=63.09$s for $\tilde{k}=20$. Second, \emph{fixing} the number of serviced crops as done for \textsf{CApR}-$n,~n=5,\dots,8$, notably reduces CPU-time; compare $\bar{T}_\text{CPU}^{\textsf{IP}-n}$ vs. $\bar{T}_\text{CPU}^{\textsf{IP}-(n+4)}$ for $n=1,\dots,4$ and $\tilde{k}=20$. Similarly, the average number of SEC-iterations $\bar{N}_\text{iterSEC+}^{\textsf{IP}-n}$ is affected. For every additional SEC-iteration, an additional IP with an increased number of SECs has to be solved. Third, for $\tilde{k}=20$ \textsf{CApR}-2 had difficulties in finding a solution within 200 SEC-iterations, see $P_\text{conv}^{\textsf{IP}-2}$. The intuitive explanation is that the optimal solution when optimizing over any subset of $K$ crops in combination with characteristic cost coefficient \eqref{eq_def_cdjkmin} is very sensitive to newly added SECs and accordingly quickly changes, which explains the many SEC-iterations, often exceeding the 200 SEC-iteration bound. Fourth, for $\tilde{k}=10$ on average only 1 SEC-iteration was required for \textsc{CApR}-8 (recall that $\bar{T}_\text{CPU}^{\textsf{IP}-n}$ and $\bar{N}_\text{iterSEC+}^{\textsf{IP}-n}$ count all IPs including the initial one \emph{without} any SECs). Thus, for (a) small $\tilde{k}$ (such as $\tilde{k}=10$) and (b) $n=5,\dots,8$ the method according to Sect. \ref{subsec_Soln_IPs} adding SECs as they are need appears appropriate. Fifth, profit $\bar{J}^{\textsf{CApR}-n}$ increases with $\tilde{k}$. This concept is further emphasised in the next Experiment 2.

% -------------------------------------------------------
\subsubsection{Experiment 2\label{subsubsec_expt2}}

The purpose of Experiment 2 is to illustrate the benefit on profit when solving for $\tilde{k}=L$, i.e., dismissing the heuristic clustering-step. For brevity, results are reported only for \textsf{CApR}-5. However, they are by trend comparable for all other 7 methods. In contrast to Experiment 1 with $L=50$, the results for IP-5 are summarised in Table \ref{tab_expt_2} for $L=40$ (due to an exploding computational complexity). For $\tilde{k}=40$ there are 2580 integer variables and 2580 inequality constraints even without incorporation of any SECs. Note that for the given experiment with the maximum possible number of clusters $\tilde{k}=L$ a notable improvement of 15.4\% over the solution for $\tilde{k}=10$ could be observed. This underlines the economic incentive for attempting to solve IPs for $\tilde{k}=L$ as stressed in Remark \ref{remark_tildekSmallerL}, if this is permitted by the available combination of computational hardware and IP-solver.

\begin{table}
\centering
%\begin{center}
\caption{Summary of Experiment 2 for \textsf{CApR}-5. The improvement potential of solving \textsf{CApR}-5 \emph{without} any clustering step for $\tilde{k}=L$ is illustrated.} 
\label{tab_expt_2}
\begin{small}
\bgroup
\def\arraystretch{1.15}%  1 is the default
\begin{tabular}{|l|c|c|c|c|}
\hline
\rowcolor[gray]{0.8} \multicolumn{5}{|c|}{\textsf{CApR}-5}\\
\hline
\rowcolor[gray]{0.93} & Unit &  $\tilde{k}=10$ & $\tilde{k}=40$ & $\Delta_\text{rel}$ [\%] \\ \hline %\hline
$\bar{J}^{\textsf{CApR}-n}$ & \euro & 590911 & 681900 & \textbf{15.4}  \\\hline
$\bar{T}_\text{CPU}^{\textsf{IP}-n}$ & s  & 0.01 & 160.1 & 1.6e6 \\
$\bar{N}_\text{iterSEC+}^{\textsf{IP}-n}$ &- & 1.9 & 39.9 & 2000  \\
$N_z^{\textsf{IP}-n}$ &-& 195 & 2580 & 1223 \\
$N_\text{eq}^{\textsf{IP}-n}$ &-& 43 & 163 & 279\\
$N_\text{ineq,NoSEC}^{\textsf{IP}-n}$ &-& 195 & 2580 & 1223 \\
$\bar{N}_\text{ineq,finalIP}^{\textsf{IP}-n}$ &-& 196.2 & 2650.0 & 1251\\\hline
$\bar{T}_\text{CPU}^{TSPs}$ & s & 0.03 & 0 & -100 \\
$\bar{N}_\text{iterSEC+}^{TSPs}$ &-& 12.20 & 0 & -100  \\\hline
$P_\text{conv}^{\textsf{IP}-n}$ & \% & 100 & 100 & 0 \\ 
\hline
\end{tabular}
\egroup
%\end{center}
\end{small}
\end{table}

%%%%%%%%%%%%%%%%%%%%%%%
\section{Conclusion\label{sec_conclusion}}

\vspace{-0.1cm}

A flexible framework for the coupling of crop assignment with vehicle routing for harvest planning was presented. This problem is relevant since the decision about crop assignment must be addressed by every farm manager at the beginning of every work-cycle starting with plant seeding and ending with harvesting. The main contribution was the proposal of 8 different IP formulations. It was found in numerical experiments that the 4 cases with enforced inclusion of any crop out of a set of crops to be computationally notable more efficient. This enforcement is applicable in practice since the list of eligible crops typically is very limited. For large-scale applications where sole IP formulations are not tractable anymore, a heuristic algorithm was proposed combining the IP-formulations with clustering of fields and the solution of local TSPs.

The focus of this paper was on problem \emph{modeling} for a broad variety of settings typical in agriculture. For future work the main task is development of an efficient \emph{solution method}. This is motivated by the fact that the output of \textsf{CApR}-$n$ remains heuristic and in general suboptimal for all $\tilde{k}<L$ due to the clustering step. Thus, IP-solution methods must be developed that enable to solve in reasonable time for $\tilde{k}=L$ and for large $L$, e.g., $L\approx 100$. Therefore, two main approaches are envisioned. First, by (a) focusing on one specific IP-formulation (e.g., \textsf{IP}-5 since \textsf{IP}-6 can be reduced to it and \textsf{IP}-7 can be solved by solving \textsf{IP}-5 for all available depots) the problem formulation can be standardised, before (b) a customised IP-solver, e.g., a branch-and-cut algorithm may  be developed. Reformulations of the SECs, for example, in form of MTZ-SECs (\cite{miller1960integer}), which introduce additional continuous variables for SECs and thereby render the problem of mixed integer nature may also be explored. Alternatively and secondly, \emph{optimisation by simulation} algorithms, e.g., Tabu search heuristics similarly to \cite{gendreau1994tabu}, may be tested to solve the proposed IP-problems.

\vspace{-0.3cm}

%\section*{References}
\bibliographystyle{model5-names}
\bibliography{mybibfile.bib}
\nocite{*}

%% Authors are advised to submit their bibtex database files. They are
%% requested to list a bibtex style file in the manuscript if they do
%% not want to use model5-names.bst.

%% References without bibTeX database:

% \begin{thebibliography}{00}

%% \bibitem must have one of the following forms:
%%   \bibitem[Jones et al.(1990)]{key}...
%%   \bibitem[Jones et al.(1990)Jones, Baker, and Williams]{key}...
%%   \bibitem[Jones et al., 1990]{key}...
%%   \bibitem[\protect\citeauthoryear{Jones, Baker, and Williams}{Jones
%%       et al.}{1990}]{key}...
%%   \bibitem[\protect\citeauthoryear{Jones et al.}{1990}]{key}...
%%   \bibitem[\protect\astroncite{Jones et al.}{1990}]{key}...
%%   \bibitem[\protect\citename{Jones et al., }1990]{key}...
%%   \harvarditem[Jones et al.]{Jones, Baker, and Williams}{1990}{key}...
%%

% \bibitem[ ()]{}

% \end{thebibliography}

\end{document}